\newcommand{\doi}[1]{{doi:\href{http://doi.org/#1}{\nolinkurl{#1}}}}
\newcommand{\eprintlnk}[1]{{\href{#1}{Electronic version}}}
\renewcommand{\url}[1]{{\href{#1}{\nolinkurl{#1}}}}
\title{Program Algebra for \\ Random Access Machine Programs}
\author{C.A. Middelburg \\
        {\small ORCID: \url{https://orcid.org/0000-0002-8725-0197}}}
\institute{Informatics Institute, Faculty of Science,
           University of Amsterdam, \\
           Science Park~900, 1098~XH Amsterdam, the Netherlands \\
           \email{C.A.Middelburg@uva.nl}}
\begin{document}

\maketitle

\begin{abstract}
This paper presents an algebraic theory of instruction sequences with
instructions for a random access machine (RAM) as basic instructions, 
the behaviours produced by the instruction sequences concerned under 
execution, and the interaction between such behaviours and RAM memories.
This theory provides a setting for the development of theory in areas 
such as computational complexity and analysis of algorithms that 
distinguishes itself by offering the possibility of equational reasoning
to establish whether an instruction sequence computes a given function 
and being more general than the setting provided by any known version of 
the RAM model of computation.
In this setting, a semi-realistic version of the RAM model of 
computation and a bit-oriented time complexity measure for this version 
are introduced. 
Under the time measure concerned, semi-realistic RAMs can be simulated 
by multi-tape Turing machines with quadratic time overhead.
\begin{keywords}
program algebra, thread algebra, random access machine, 
semi-realistic RAM program, bit-oriented time complexity. 
\end{keywords}%
\begin{classcode}
D.1.4, E.3, F.1.1, F.1.3.
\end{classcode}
\end{abstract}

\section{Introduction}
\label{sect-intro}

This paper introduces an algebraic theory which provides a setting for 
the development of theory in areas such as computational complexity and 
analysis of algorithms that elaborates on a version of the random access 
machine (RAM) model of computation.
The setting in question distinguishes itself by offering the possibility 
of equational reasoning to establish whether an instruction sequence 
computes a given function, and by being more general than the setting 
provided by any known version of the RAM model of computation.
Many known and unknown versions of this model of computation can be 
dealt with by imposing apposite restrictions.
We expect that the generality is conducive to the investigation of novel 
issues in the areas of computational complexity and analysis of 
algorithms.
This expectation is based on our experience with a comparable algebraic 
theory of instruction sequences, where instructions operate on Boolean 
registers, in previous work 
(see~\cite{BM13b,BM13a,BM14a,BM14e,BM13c,BM18a}).

This paper belongs to a line of research that started with~\cite{BL02a}, 
and of which an enumeration is available at~\cite{SiteIS}. 
The first objective of this line of research is to understand the 
concept of a program. 
The notion of an instruction sequence appears in the work in question as 
a mathematical abstraction for which the rationale is based on this 
objective. 
The structure of the mathematical abstraction at issue has been 
determined in advance with the hope of applying it in diverse 
circumstances where in each case the fit may be less than perfect. 
Until now, this work has, among other things, yielded an approach to 
computational complexity where program size is used as complexity 
measure, a contribution to the conceptual analysis of the notion of an 
algorithm, and new insights into such diverse issues as the halting 
problem, garbage collection, program parallelization for the purpose of 
explicit multi-threading, and virus detection.

The basis of all the work in question (see~\cite{SiteIS}) is the 
combination of an algebraic theory of single-pass instruction sequences, 
called program algebra, and an algebraic theory of mathematical objects 
that represent the behaviours produced by such instruction sequences 
under execution, called basic thread algebra, extended to deal with the 
interaction between such behaviours and components of an execution 
environment for instruction sequences.
This combination is parameterized by a set of basic instructions and a 
set of mathematical objects that represent the execution environment 
components.
                  
The current paper contains a simplified presentation of the 
instantiation of this combination in which RAM memories are taken as the 
components of an execution environment, instructions for a RAM are taken 
as basic instructions, and an execution environment consists of only one 
component. 
Because we opt for the most general instantiation, all instructions 
that do not read out or alter more than one register from the RAM memory 
are taken as basic instructions.
Both known and unknown versions of the RAM model of computation can be
dealt with by restriction on the set of basic instructions.
We expect that by this set-up the presented instantiation can be useful 
to rigorous investigations of novel issues relating to computational 
complexity and analysis of algorithms.

Program algebra and basic thread algebra were first presented 
in~\cite{BL02a}.%
\footnote
{In that paper and the first subsequent papers, basic thread algebra 
 was introduced under the name basic polarized process algebra.}
The extension of basic thread algebra referred to above, an extension to 
deal with the interaction between the behaviours produced by instruction 
sequences under execution and components of an execution environment, 
was first presented in~\cite{BM09k}.
The presentation of the extension is rather involved because it is
parameterized and owing to this covers a generic set of basic 
instructions and a generic set of objects that represent execution 
environment components.
In the current paper, a much less involved presentation is obtained by 
considering only the case where execution environment components are 
RAM memories, basic instructions are instructions for a RAM, and an 
execution environment consists of only one component.

After the presentation in question, we make precise in the setting of 
the presented theory what it means that a given instruction sequence 
computes a given partial function on bit strings, show that a relatively 
unknown, but more or less realistic, version of the RAM model of 
computation can be dealt with in this setting by imposing apposite 
restrictions, and introduce for this model an alternative to the usual 
time measures for versions of the RAM model.
Under the alternative time measure, RAMs from the version of the RAM 
model concerned can be simulated by multi-tape Turing machines with 
quadratic time overhead.
Moreover, under a usual space measure for versions of the RAM model, 
RAMs from this version of the RAM model can be simulated by multi-tape 
Turing machines with constant-factor space overhead.

With the instruction set of the version of the RAM model of computation
dealt with in this paper, a fairly realistic idealization of a real 
computer is obtained.
The introduced alternative to the usual time measures for versions of 
the RAM model has its origin in the simple idea that the time that it 
takes to execute an instruction from this instruction set should be 
based on the number of steps that a multi-tape Turing machine with input 
alphabet $\Bit$ needs to simulate the instruction.
It is to be expected that its instruction set makes the version of the 
RAM model dealt with in this paper very practical to the expression and
analysis of many algorithms.

This paper is organized as follows.
First, a survey is given of program algebra, basic thread algebra, and 
an extension of their combination that makes precise which behaviours 
are produced by instruction sequences under execution 
(Sections~\ref{sect-PGA}, \ref{sect-BTA}, and~\ref{sect-TE-BC}).
Next, the surveyed theory is instantiated and extended to handle 
interaction between instruction sequences (with instructions for a RAM) 
under execution and the memory of a RAM 
(Sections~\ref{sect-TTFA} and~\ref{sect-TSI}).
Then, in the setting of the resulting theory, it is made precise what it 
means that a given instruction sequence computes a given partial 
function (Section~\ref{sect-comput-boolstring-fnc}) and a more or less 
realistic version of the RAM model of computation is described
(Sections~\ref{sect-RAM-instructions}, \ref{sect-RAM-realistic}, 
and~\ref{sect-SRRAM-programs}).
After that, a new time measure and a known space measure for this 
model of computation are introduced (Sections~\ref{sect-bit-oriented}  
and~\ref{sect-bit-oriented-space}) and the former measure is discussed
(Section~\ref{sect-discussion}).
Finally, some concluding remarks are made (Section~\ref{sect-concl}).

In this paper, some familiarity with algebraic specification, 
computability, and computational complexity is assumed.
The relevant notions are explained in many handbook chapters and 
textbooks, e.g.~\cite{EM85a,ST12a,Wir90a} for the relevant notions 
concerning algebraic specification and~\cite{AHU74a,Mor98a,Pap94a} for 
the relevant notions concerning computability and computational 
complexity.

Sections~\ref{sect-PGA}--\ref{sect-TE-BC}, i.e.\ the preliminary 
sections of this paper, are largely shortened versions of Sections~2--4 
of~\cite{BM18b}, which, in turn, draw from the preliminary sections of 
several earlier papers.

\section{Program Algebra}
\label{sect-PGA}

This section presents a survey of program algebra (\PGA).
A program is perceived in \PGA\ as a single-pass instruction sequence, 
i.e.\ a possibly infinite sequence of instructions of which each 
instruction is executed at most once and can be dropped after it has 
been executed or jumped over.

It is assumed that a fixed but arbitrary set $\BInstr$ of 
\emph{basic instructions} has been given.
$\BInstr$ is the basis for the set of instructions that may occur in 
instruction sequences.
The intuition is that the execution of a basic instruction may modify a
state and must produce the value $\zero$ or $\one$ as reply at its 
completion.
The produced reply may be state-dependent.
In applications of \PGA, the instructions taken as basic instructions 
vary from instructions relating to Boolean registers to machine language 
instructions of actual computers.   

The set of instructions of which the instruction sequences are composed 
is the set that consists of the following elements:
\begin{itemize}
\item
for each $a \in \BInstr$, a \emph{plain basic instruction} $a$;
\item
for each $a \in \BInstr$, a \emph{positive test instruction} $\ptst{a}$;
\item
for each $a \in \BInstr$, a \emph{negative test instruction} $\ntst{a}$;
\item
for each $l \in \Nat$, a \emph{forward jump instruction} $\fjmp{l}$;
\item
a \emph{termination instruction} $\halt$.
\end{itemize}
We write $\PInstr$ for this set.
The elements from this set are called \emph{primitive instructions}.

On execution of an instruction sequence, the primitive instructions of 
which it is composed have the following effects:
\begin{itemize}
\item
the effect of a positive test instruction $\ptst{a}$ is that basic
instruction $a$ is executed and execution proceeds with the next
primitive instruction if the produced reply is $\one$ and otherwise the 
next primitive instruction is skipped and execution proceeds with the
primitive instruction following the skipped one --- inaction occurs if 
there is no primitive instruction to proceed with;
\item
the effect of a negative test instruction $\ntst{a}$ is the same as
the effect of $\ptst{a}$, but with the role of the produced reply
reversed;
\item
the effect of a plain basic instruction $a$ is the same as the effect
of $\ptst{a}$, but execution always proceeds as if the produced reply is
$\one$;
\item
the effect of a forward jump instruction $\fjmp{l}$ is that execution
proceeds with the $l$th next primitive instruction --- inaction occurs 
if $l$ equals $\zero$ or there is no primitive instruction to proceed with;
\item
the effect of the termination instruction $\halt$ is that execution 
terminates.
\end{itemize}
The phrase ``inaction occurs'' indicates that no more basic instructions 
are executed, but execution does not terminate.

\PGA\ has one sort: the sort $\InSeq$ of \emph{instruction sequences}. 
To build terms of sort $\InSeq$, \PGA\ has the following constants and 
operators:
\begin{itemize}
\item
for each $u \in \PInstr$, 
the \emph{instruction} constant $\const{u}{\InSeq}$\,;
\item
the binary \emph{concatenation} operator 
$\funct{\ph \conc \ph}{\InSeq \x \InSeq}{\InSeq}$\,;
\item
the unary \emph{repetition} operator 
$\funct{\ph\rep}{\InSeq}{\InSeq}$\,.
\end{itemize}
Terms of sort $\InSeq$ are built as usual in the one-sorted case.
We assume that there are infinitely many variables of sort $\InSeq$, 
including $X,Y,Z$.
We use infix notation for concatenation and postfix notation for
repetition.

A \PGA\ term in which the repetition operator does not occur is called 
a \emph{repetition-free} \PGA\ term.

One way of thinking about closed \PGA\ terms is that they represent 
non-empty, possibly infinite sequences of primitive instructions with 
finitely many distinct suffixes.
Let $t$ and $t'$ be closed \PGA\ terms representing instruction 
sequences $s$ and $s'$.
Then the operators of \PGA\ can be explained as follows:
\begin{itemize}
\item
$t \conc t'$ represents the concatenation of $s$ and $s'$;
\item 
$t\rep$ represents $s$ concatenated infinitely many times with itself.
\end{itemize}
 
The axioms of \PGA\ are given in Table~\ref{axioms-PGA}.%
\begin{table}[!t]
\caption{Axioms of \PGA} 
\label{axioms-PGA}
\begin{eqntbl}
\begin{axcol}
(X \conc Y) \conc Z = X \conc (Y \conc Z)             & \axiom{PGA1}  \\
(X^n)\rep = X\rep                                     & \axiom{PGA2}  \\
X\rep \conc Y = X\rep                                 & \axiom{PGA3}  \\
(X \conc Y)\rep = X \conc (Y \conc X)\rep             & \axiom{PGA4} 
\eqnsep
\fjmp{k{+}1} \conc u_1 \conc \ldots \conc u_k \conc \fjmp{0} =
\fjmp{0} \conc u_1 \conc \ldots \conc u_k \conc \fjmp{0} 
                                                      & \axiom{PGA5}  \\
\fjmp{k{+}1} \conc u_1 \conc \ldots \conc u_k \conc \fjmp{l} =
\fjmp{l{+}k{+}1} \conc u_1 \conc \ldots \conc u_k \conc \fjmp{l}
                                                      & \axiom{PGA6}  \\
(\fjmp{l{+}k{+}1} \conc u_1 \conc \ldots \conc u_k)\rep =
(\fjmp{l} \conc u_1 \conc \ldots \conc u_k)\rep       & \axiom{PGA7}  \\
\fjmp{l{+}k{+}k'{+}2} \conc u_1 \conc \ldots \conc u_k \conc
(v_1 \conc \ldots \conc v_{k'{+}1})\rep = {} \\ \phantom{{}{+}k'}
\fjmp{l{+}k{+}1} \conc u_1 \conc \ldots \conc u_k \conc
(v_1 \conc \ldots \conc v_{k'{+}1})\rep               & \axiom{PGA8} 
\end{axcol}
\end{eqntbl}
\end{table}
In this table, 
$u$, $u_1,\ldots,u_k$ and $v_1,\ldots,v_{k'+1}$ stand for arbitrary 
primitive instructions from $\PInstr$, 
$k$, $k'$, and $l$ stand for arbitrary natural numbers from $\Nat$, and
$n$ stands for an arbitrary natural number from $\Natpos$.%
\footnote
{We write $\Natpos$ for the set $\set{n \in \Nat \where n \geq 1}$ of
positive natural numbers.}
For each $n \in \Natpos$, the term $t^n$, where $t$ is a \PGA\ term, 
is defined by induction on $n$ as follows: $t^1 = t$, and 
$t^{n+1} = t \conc t^n$.

Let $t$ and $t'$ be closed \PGA\ terms.
Then $t = t'$ is derivable from the axioms of \PGA\ iff $t$ and $t'$ 
represent the same instruction sequence after changing all chained jumps 
into single jumps (which corresponds to applying axioms PGA5 and PGA6) 
and making all jumps as short as possible (which corresponds to applying 
axioms PGA7 and PGA8). 
Moreover, $t = t'$ is derivable from PGA1--PGA4 iff $t$ and $t'$ 
represent the same instruction sequence. 

The informal explanation of closed \PGA\ terms as sequences of primitive 
instructions given above can be looked upon as a sketch of the intended 
model of axioms PGA1--PGA4.
This model, which is described in detail in, for example, \cite{BM12b}, 
is an initial model of axioms PGA1--PGA4.
Henceforth, the instruction sequences of the kind considered in \PGA\ 
are called \PGA\ instruction sequences.

\section{Basic Thread Algebra for Finite and Infinite Threads}
\label{sect-BTA}

In this section, we introduce basic thread algebra (\BTA) and an 
extension of \BTA\ that reflects the idea that infinite threads are 
identical if their approximations up to any finite depth are identical.

\BTA\ is concerned with mathematical objects that model in a direct 
way the behaviours produced by \PGA\ instruction sequences under 
execution.
The objects in question are called threads.
A thread models a behaviour that consists of performing basic actions in 
a sequential fashion.
Upon performing a basic action, a reply from an execution environment
determines how the behaviour proceeds subsequently.
The possible replies are the values $\zero$ and~$\one$.

The basic instructions from $\BInstr$ are taken as basic actions.
Besides, $\Tau$ is taken as a special basic action.
It is assumed that $\Tau \notin \BAct$.
We write $\BActTau$ for $\BAct \union \set{\Tau}$.

\BTA\ has one sort: the sort $\Thr$ of \emph{threads}. 
To build terms of sort $\Thr$, \BTA\ has the following constants and 
operators:
\begin{itemize}
\item
the \emph{inaction} constant $\const{\DeadEnd}{\Thr}$;
\item
the \emph{termination} constant $\const{\Stop}{\Thr}$;
\item
for each $\alpha \in \BActTau$, the binary 
\emph{postconditional composition} operator 
$\funct{\pcc{\ph}{\alpha}{\ph}}{\linebreak[2]\Thr \x \Thr}{\Thr}$.
\end{itemize}
Terms of sort $\Thr$ are built as usual in the one-sorted case. 
We assume that there are infinitely many variables of sort $\Thr$, 
including $x,y,z$.
We use infix notation for postconditional composition.
We introduce \emph{basic action prefixing} as an abbreviation: 
$\alpha \bapf t$, where $\alpha \in \BActTau$ and $t$ is a \BTA\ term, 
abbreviates $\pcc{t}{\alpha}{t}$.
We treat an expression of the form $\alpha \bapf t$ and the \BTA\ term 
that it abbreviates as syntactically the same.

Closed \BTA\ terms are considered to represent threads.
The constants of \BTA\ can be explained as follows:
\begin{itemize}
\item
$\DeadEnd$ represents the thread that models inactive behaviour, i.e.\ 
the behaviour that performs no more basic actions and does not terminate
either; 
\item
$\Stop$ represents the thread that models the behaviour that does 
nothing else but terminate. 
\end{itemize}
Let $t$ and $t'$ be closed \BTA\ terms representing threads $r$ and 
$r'$.
Then the operators of \PGA\ can be explained as follows:
\begin{itemize}
\item
$\pcc{t}{\alpha}{t'}$ represents the thread that models the behaviour 
that first performs $\alpha$ and then proceeds as the behaviour modeled 
by $r$ if the reply from the execution environment is $\one$ and 
otherwise proceeds as the behaviour modeled by $r'$. 
\end{itemize}

\BTA\ has only one axiom.
This axiom is given in Table~\ref{axioms-BTA}.
\begin{table}[!t]
\caption{Axiom of \BTA} 
\label{axioms-BTA}
\begin{eqntbl}
\begin{axcol}
\pcc{x}{\Tau}{y} = \Tau \bapf x                             & \axiom{T1}
\end{axcol}
\end{eqntbl}
\end{table}
It tells us that performing $\Tau$, which is considered performing an 
internal action, always leads to the reply $\one$.

Each closed \BTA\ term represents a finite thread, i.e.\ a thread with 
a finite upper bound to the number of basic actions that it can perform.
Infinite threads, i.e.\ threads without such a finite upper bound, can 
be defined by means of a set of \emph{recursion equations}, i.e.\ a set 
$\set{x_i = t_i \where i \in I}$,
 where $I$ is an index set,
 each $x_i$ is a variable of sort $\Thr$, 
 each $t_i$ is a \BTA\ term in which only variables from
 $\set{x_i \where i \in I}$ occur, and  
 $x_i \neq x_j$ for all $i,j \in I$ with $i \neq j$.
A regular thread is a finite or infinite thread that can be defined by 
means of a finite set of recursion equations.
The behaviours produced by \PGA\ instruction sequences under execution 
are exactly the behaviours modeled by regular threads.

Two infinite threads are considered identical if their approximations up 
to any finite depth are identical.
The approximation up to depth $n$ of a thread models the behaviour that 
differs from the behaviour modeled by the thread in that it will become
inactive after it has performed $n$ actions unless it would terminate at
this point.
The approximation induction principle (AIP) is a conditional equation 
that formalizes the above-mentioned view on infinite threads.
In AIP, the approximation up to depth $n$ is phrased in terms of the
unary \emph{projection} operator $\funct{\proj{n}}{\Thr}{\Thr}$.

The axioms for the projection operators and AIP are given in
Table~\ref{axioms-BTAinf}.
\begin{table}[!t]
\caption{Axioms for the projection operators and AIP} 
\label{axioms-BTAinf}
\begin{eqntbl}
\begin{axcol}
\proj{0}(x) = \DeadEnd                                  & \axiom{PR1} \\
\proj{n+1}(\DeadEnd) = \DeadEnd                         & \axiom{PR2} \\
\proj{n+1}(\Stop) = \Stop                               & \axiom{PR3} \\
\proj{n+1}(\pcc{x}{\alpha}{y}) =
\pcc{\proj{n}(x)}{\alpha}{\proj{n}(y)}                  & \axiom{PR4}
\eqnsep
\LAND{n \geq 0} \proj{n}(x) = \proj{n}(y) \Limpl x = y  & \axiom{AIP}
\end{axcol}
\end{eqntbl}
\end{table}
In this table, $\alpha$ stands for an arbitrary basic action from 
$\BActTau$ and $n$ stands for an arbitrary natural number from $\Nat$.
We write \BTAinf\ for \BTA\ extended with the projection operators, 
the axioms for the projection operators, and~AIP.

Because we have to deal with conditional equational formulas with a 
countably infinite number of premises in \BTAinf, it is understood that
infinitary conditional equational logic is used in deriving equations 
from the axioms of \BTAinf.
A complete inference system for infinitary conditional equational logic 
can be found in~\cite{BW90,GV93,Gog21a}.

The depth of a finite thread is the maximum number of basic actions that 
it can perform before it terminates or becomes inactive.
We define the function $\depth$ that assigns to each closed \BTA\ term
the depth of the finite thread that it represents recursively as follows:
\begin{ldispl}
\depth(\Stop) = 0\;, \\
\depth(\DeadEnd) = 0\;, \\
\depth(\pcc{t}{\alpha}{t'}) = \max \set{\depth(t),\depth(t')} + 1\;.
\end{ldispl}%

\section{Thread Extraction from Instruction Sequences}
\label{sect-TE-BC}

In this section, we make precise in the setting of \BTAinf\ which 
behaviours are produced by \PGA\ instruction sequences under 
execution.

To make precise which behaviours are produced by \PGA\ instruction 
sequences under execution, we introduce an operator $\extr{\ph}$.
For each closed \PGA\ term $t$, $\extr{t}$ represents the thread that
models the behaviour produced by the instruction sequence represented 
by $t$ under execution.

Formally, we combine \PGA\ with \BTAinf\ and extend the combination with 
the \emph{thread extraction} operator $\funct{\extr{\ph}}{\InSeq}{\Thr}$ 
and the axioms given in Table~\ref{axioms-thread-extr}.%
\begin{table}[!t]
\caption{Axioms for the thread extraction operator} 
\label{axioms-thread-extr}
\begin{eqntbl}
\begin{axcol}
\extr{a} = a \bapf \DeadEnd                            & \axiom{TE1}  \\
\extr{a \conc X} = a \bapf \extr{X}                    & \axiom{TE2}  \\
\extr{\ptst{a}} = a \bapf \DeadEnd                     & \axiom{TE3}  \\
\extr{\ptst{a} \conc X} = \pcc{\extr{X}}{a}{\extr{\fjmp{2} \conc X}}
                                                       & \axiom{TE4}  \\
\extr{\ntst{a}} = a \bapf \DeadEnd                     & \axiom{TE5}  \\
\extr{\ntst{a} \conc X} = \pcc{\extr{\fjmp{2} \conc X}}{a}{\extr{X}}
                                                       & \axiom{TE6}
\end{axcol}
\qquad
\begin{axcol}
\extr{\fjmp{l}} = \DeadEnd                             & \axiom{TE7}  \\
\extr{\fjmp{0} \conc X} = \DeadEnd                     & \axiom{TE8}  \\
\extr{\fjmp{1} \conc X} = \extr{X}                     & \axiom{TE9}  \\
\extr{\fjmp{l+2} \conc u} = \DeadEnd                   & \axiom{TE10} \\
\extr{\fjmp{l+2} \conc u \conc X} = \extr{\fjmp{l+1} \conc X}
                                                       & \axiom{TE11} \\
\extr{\halt} = \Stop                                   & \axiom{TE12} \\
\extr{\halt \conc X} = \Stop                           & \axiom{TE13}
\end{axcol}
\end{eqntbl}
\end{table}
In this table, 
$a$ stands for an arbitrary basic instruction from $\BInstr$, 
$u$ stands for an arbitrary primitive instruction from $\PInstr$, and 
$l$ stands for an arbitrary natural number from $\Nat$.
We write \PGABTA\ for the combination of \PGA\ and \BTAinf\ extended 
with the thread extraction operator and the axioms for the thread 
extraction operator.

As mentioned in Section~\ref{sect-PGA}, on execution of an instruction 
sequence, inaction occurs if there is no primitive instruction to 
proceed with.
That is why $\DeadEnd$ occurs in axioms TE1, TE3, TE5, TE7, and TE10.

If a closed \PGA\ term $t$ represents an infinite instruction sequence, 
then we can extract the approximations of the thread modeling the 
behaviour produced by that instruction sequence under execution up to 
every finite depth: for each $n \in \Nat$, there exists a closed \BTA\ 
term $t''$ such that $\proj{n}(\extr{t}) = t''$ is derivable from axioms 
PGA1--PGA8, PR1--PR4, and TE1--TE13.
If closed \PGA\ terms $t$ and $t'$ represent infinite instruction 
sequences that produce the same behaviour under execution, then this can
be proved using the following instance of AIP:
$\LAND{n \geq 0} \proj{n}(\extr{t}) = \proj{n}(\extr{t'}) \Limpl
 \extr{t} = \extr{t'}$.

If a closed \PGA\ term $t$ represents an instruction sequence that
starts with an infinite chain of forward jumps, then TE9 and TE11 can 
be applied to $\extr{t}$ infinitely often without ever showing that a 
basic action is performed.
In this case, we have to do with inaction and, being consistent with 
that, $\extr{t} = \DeadEnd$ is derivable from axioms PGA1--PGA8, 
PR1--PR4, AIP, and TE1--TE13.
By contrast, $\extr{t} = \DeadEnd$ is not derivable from axioms 
PGA1--PGA4, PR1--PR4, AIP, and TE1--TE13.
However, if closed \PGA\ terms $t$ and $t'$ represent instruction 
sequences in which no infinite chains of forward jumps occur, then 
$t = t'$ is derivable from the axioms of \PGA\ only if 
$\extr{t} = \extr{t'}$ is derivable from PGA1--PGA4, PR1--PR4, AIP, and 
TE1--TE13.

The following proposition, proved in~\cite{BM12b}, puts the 
expressiveness of \PGA\ in terms of producible behaviours.
\begin{proposition}
\label{proposition-expr}
Let $\mathcal{M}$ be a model of \PGABTA.
Then, for each element $r$ from the domain associated with the sort 
$\Thr$ in $\mathcal{M}$, there exists a closed \PGA\ term $t$ such that 
the interpretation of $\extr{t}$ in $\mathcal{M}$ is $r$ iff 
$r$ can be defined by means of a finite set of recursion equations.
\end{proposition}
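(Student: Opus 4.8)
The plan is to prove the two directions of the biconditional separately, leaning on two facts already recorded in the excerpt: every closed \PGA\ term represents a sequence of primitive instructions with only finitely many distinct suffixes, and every model of \PGABTA\ satisfies AIP, so that guarded finite recursion specifications have unique solutions in $\mathcal{M}$.

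For the left-to-right direction, suppose $r$ is the interpretation of $\extr{t}$ for some closed \PGA\ term $t$. I would first note that the collection $S$ of distinct suffixes of the instruction sequence represented by $t$ is finite. For each $s \in S$ I introduce a variable $x_s$ to stand for the interpretation of $\extr{s}$, and I use the thread extraction axioms TE1--TE13 to rewrite each $\extr{s}$ into one of the forms $\Stop$, $\DeadEnd$, or $\pcc{x_{s'}}{a}{x_{s''}}$, where $s'$ and $s''$ are again suffixes in $S$ (the suffixes reached by the two branches of a leading test, or by dropping a leading plain basic instruction or jump). Infinite chains of forward jumps are the only delicate point here: along such a chain TE9 and TE11 can be applied forever, but as the excerpt observes this case yields $\DeadEnd$, so the corresponding variable is set to $\DeadEnd$. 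The outcome is a finite set of recursion equations whose $x_t$-component is satisfied by $r$; since the specification so obtained is guarded, $r$ is its unique $x_t$-solution, which is exactly what it means for $r$ to be definable by finitely many recursion equations.

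For the right-to-left direction, which I expect to be the substantial part, I would start from a finite recursion specification defining $r$ and first bring it into linear guarded form, so that its equations read $x_i = \Stop$, $x_i = \DeadEnd$, or $x_i = \pcc{x_{f(i)}}{a_i}{x_{g(i)}}$ for $i = 1, \ldots, n$, with $x_1$ the component denoting $r$. I then build, for each $i$, a block $B_i$ of primitive instructions: $B_i$ is $\halt$ in the first case, $\fjmp{0}$ in the second, and $\ptst{a_i} \conc \fjmp{p_i} \conc \fjmp{q_i}$ in the third, and I set $t = (B_1 \conc \cdots \conc B_n)\rep$. The offsets $p_i$ and $q_i$ are chosen, from the known lengths of the blocks, so that the leading test's true-branch and false-branch forward jumps land on the start of block $f(i)$ and block $g(i)$ respectively; because $t$ is a repetition, a forward jump that would have to run backwards within a single period instead lands on the corresponding block in the next period, which produces the same behaviour. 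Using TE2, TE4, and TE7--TE13 one then checks that $\extr{t}$ and its relevant subthreads satisfy exactly the chosen equations, and by the uniqueness of guarded solutions under AIP the interpretation of $\extr{t}$ in $\mathcal{M}$ coincides with $r$.

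The hard part will be the right-to-left construction: verifying that the jump offsets are computed correctly, that the periodic wrapping faithfully realizes the possibly cyclic transition structure of the specification, and that the thread extraction axioms unfold $\extr{t}$ into the intended postconditional compositions rather than into an unintended inaction. Reducing an arbitrary finite recursion specification to linear guarded form, and discharging the final identification of $\extr{t}$ with $r$ through AIP rather than by a direct equational derivation, are the two places where the most care is needed.
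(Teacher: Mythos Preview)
The paper does not actually prove this proposition: it is stated with the remark ``proved in~\cite{BM12b}'' and no argument is given here. So there is no in-paper proof to compare against; the result is imported wholesale from the cited reference.

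Your outline is the standard one for this expressiveness result and is essentially the argument that appears in the literature on \PGA\ and \BTA: exploit the finitely-many-suffixes property of closed \PGA\ terms to read off a finite guarded specification via TE1--TE13, and conversely compile a finite guarded specification into a single periodic instruction sequence whose blocks realise the right-hand sides and whose forward jumps (wrapping through the repetition) realise the transitions. Two places deserve a little more care than your sketch gives them. First, in the left-to-right direction, after applying TE4 or TE6 you get subterms like $\extr{\fjmp{2} \conc X}$ that are not literally suffixes; you must chase these through TE9/TE11 until you either reach a genuine suffix or detect a jump cycle (yielding $\DeadEnd$), and the finiteness of the suffix set is what guarantees this chase terminates. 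Second, in the right-to-left direction you invoke ``uniqueness of guarded solutions under AIP'' to identify the interpretation of $\extr{t}$ with $r$; this is correct, but it presupposes that ``definable by a finite set of recursion equations'' is read as ``is the solution of a finite \emph{guarded} specification'' (or one that can be brought to guarded form), since an unguarded equation such as $x = x$ admits every thread as a solution and would trivialise the statement. That reading is the intended one in the \BTA\ literature, but you should make it explicit.
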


\PGA\ instruction sequences are behaviourally equivalent if they 
produce the same behaviour under execution.
Instruction sequences are behaviourally congruent if they produce the
same behaviour irrespective of the way they are entered and the way
they are left during execution (see also~\cite{BL02a,BM12b}).

\section{Basic Instructions for Random Access Machines}
\label{sect-TTFA}

\PGA\ instruction sequences under execution may interact with components 
of their execution environment.
The execution environment components vary from one application of \PGA\
to another.
In this section, we consider basic instruction for the case where the 
execution environment components are memories of RAMs.

The memory of a RAM consists of a countably infinite number of registers 
which are numbered by natural numbers.
Each register is capable of containing a bit string of arbitrary length.
The contents of the registers constitute the state of the memory.

A \emph{RAM memory state} is a function 
$\funct{\sigma}{\Nat}{\BitStr}$ 
that satisfies the condition that there exists a $i \in \Nat$ such that, 
for all $j \in \Nat$, $\sigma(i + j) = \emptystr$.%
\footnote{We write $\emptystr$ for the empty bit string.} 
We write $\MStates$ for the set of all RAM memory states, and we write
$\sigma_\emptystr$ for the unique $\sigma \in \MStates$ such that 
$\sigma(i) = \emptystr$ for all $i \in \Nat$.

Let $\sigma$ be a RAM memory state.
Then, for all $i \in \Nat$, $\sigma(i)$ is the content of the register 
with number $i$ in memory state $\sigma$.
The condition expresses that the part of the memory that is actually in 
use remains finite.

Henceforth, we will use the notation $\fncvn{\sigma}{i_1}{w_1}{i_n}{w_n}$.
For each $\funct{\sigma}{\Nat}{\BitStr}$, $i_1,\ldots,i_n \in \Nat$, and 
$w_1,\ldots,w_n \in \BitStr$, $\fncvn{\sigma}{i_1}{w_1}{i_n}{w_n}$ is the 
function $\funct{\sigma'}{\Nat}{\BitStr}$ defined as follows: 
$\sigma'(i_1) = w_1$, \ldots, $\sigma'(i_n) = w_n$, and, for all 
$j \in \Nat$ with $j \notin \set{i_1,\ldots,i_n}$, 
$\sigma'(j) = \sigma(j)$.

The execution of an instruction by a RAM may change the memory state of 
the RAM and must produce the value $\zero$ or $\one$ as reply.

The set of basic instructions used in this case consists of a 
\emph{basic RAM instruction} $\iram{p}{q}$ 
for each $\funct{p}{\MStates}{\Bit}$ and
$\funct{q}{\MStates}{\MStates}$ 
that satisfy the following conditions (which are explained below) for 
all $\sigma \in \MStates$:
\begin{itemize}
\item[(a)]
there exists at most one $i \in \Nat$ for which 
there exists a $w \in \BitStr$ such that 
$\sigma(i) \neq w$ and $p(\sigma) \neq p(\fncv{\sigma}{i}{w})$,
\item[(b)]
there exists at most one $i \in \Nat$ for which 
$\sigma(i) \neq q(\sigma)(i)$,
\item[(c)]
if
there exists an $i \in \Nat$ for which 
there exists a $w \in \BitStr$ such that 
$\sigma(i) \neq w$ and $p(\sigma) \neq p(\fncv{\sigma}{i}{w})$ and
there exists an $i \in \Nat$ for which $\sigma(i) \neq q(\sigma)(i)$,
then 
there exists an $i \in \Nat$ for which 
there exists a $w \in \BitStr$ such that 
$\sigma(i) \neq w$ and $p(\sigma) \neq p(\fncv{\sigma}{i}{w})$ and
$\sigma(i) \neq q(\sigma)(i)$.
\end{itemize}
We write $\BInstrram$ for this set.

Each basic RAM instruction leads to carrying out an operation on a RAM 
memory when the instruction is executed.
The intuition is basically that carrying out the operation modifies the 
content of a single register of the RAM memory and produces the value
$\zero$ or $\one$ as reply depending on the content of this register.
More precisely, the execution of a basic RAM instruction $\iram{p}{q}$ 
has the following effects:
\begin{itemize}
\item
if the RAM memory state is $\sigma$ when the execution of $\iram{p}{q}$ 
starts, then the reply produced on termination of the execution of 
$\iram{p}{q}$ is $p(\sigma)$;
\item
if the RAM memory state is $\sigma$ when the execution of $\iram{p}{q}$ 
starts, then the RAM memory state is $q(\sigma)$ when the execution of 
$\iram{p}{q}$ terminates.
\end{itemize}
Condition~(a) expresses that a basic RAM instruction does not produce a 
reply that depends on the content of more than one register.
Condition~(b) expresses that a basic RAM instruction does not modify the 
content of more than one register.
Condition~(c) expresses that a basic RAM instruction produces a reply 
that depends on the content of a register and modifies the content of a 
register only if the former register is the same as the latter register.

A function from $\MStates$ to $\Bit$ for which condition~(a) 
trivially holds is the function $\OFunc$ defined by 
$\OFunc(\sigma) = \one$.
A function from $\MStates$ to $\MStates$ for which condition~(b) 
trivially holds is the function $\IFunc$ defined by 
$\IFunc(\sigma) = \sigma$.
From Section~\ref{sect-RAM-realistic}, only basic RAM instruction of the 
forms $\iram{\OFunc}{q}$ and $\iram{p}{\IFunc}$ are considered.

We write $\PGABTAram$ for \PGABTA\ with $\BInstr$ instantiated 
by~$\BInstrram$.

\section{Interaction of Threads with RAM Memories}
\label{sect-TSI}

If instructions from $\BInstrram$ are taken as basic instructions, a 
\PGA\ instruction sequence under execution may interact with the memory 
of a RAM.
In line with this kind of interaction, a thread may perform a basic 
action basically for the purpose of changing 
the memory state of a RAM or receiving a reply that depends on the 
memory state of a RAM.
In this section, we introduce related constants and operators.

We extend $\PGABTA(\BInstrram)$ with the sort $\RAM$ of 
\emph{RAM memories}, the following operators:
\begin{itemize}
\item
for each $\sigma \in \MStates \union \set{\Div}$, 
the \emph{RAM memory} constant 
$\const{\sram{\sigma}}{\RAM}$;
\item
the binary \emph{use} operator
$\funct{\ph \sfuse \ph}{\Thr \x \RAM}{\Thr}$;
\item
the binary \emph{apply} operator
$\funct{\ph \sfapply \ph}{\Thr \x \RAM}{\RAM}$;
\end{itemize}
and the axioms given in Tables~\ref{axioms-use-apply}.%
\footnote
{We write $t[t'/x]$ for the result of substituting term $t'$ for 
variable $x$ in term $t$.}
\begin{table}[!t]
\caption{Axioms for the use and apply operator} 
\label{axioms-use-apply}
\begin{eqntbl}
\begin{saxcol}
\Stop  \sfuse u = \Stop                                & & \axiom{U1} \\
\DeadEnd \sfuse u = \DeadEnd                           & & \axiom{U2} \\
(\Tau \bapf x) \sfuse u = \Tau \bapf (x \sfuse u)      & & \axiom{U3} \\
(\pcc{x}{\iram{p}{q}}{y}) \sfuse \sram{\sigma} = 
\Tau \bapf (x \sfuse \sram{q(\sigma)})
                                & \mif p(\sigma) = \one  & \axiom{U4} \\
(\pcc{x}{\iram{p}{q}}{y}) \sfuse \sram{\sigma} =
\Tau \bapf (y \sfuse \sram{q(\sigma)}) 
                                & \mif p(\sigma) = \zero & \axiom{U5} \\
(\pcc{x}{\iram{p}{q}}{y}) \sfuse \sram{\Div} = 
\Tau \bapf \DeadEnd                                    & & \axiom{U6} \\
\proj{n}(x \sfuse u) = \proj{n}(x) \sfuse u            & & \axiom{U7} 
\eqnsep
\Stop  \sfapply u = u                                  & & \axiom{A1} \\
\DeadEnd \sfapply u = \sram{\Div}                      & & \axiom{A2} \\
(\Tau \bapf x) \sfapply u = \Tau \bapf (x \sfapply u)  & & \axiom{A3} \\
(\pcc{x}{\iram{p}{q}}{y}) \sfapply \sram{\sigma} = 
x \sfapply \sram{q(\sigma)}     & \mif p(\sigma) = \one  & \axiom{A4} \\
(\pcc{x}{\iram{p}{q}}{y}) \sfapply \sram{\sigma} = 
y \sfapply \sram{q(\sigma)}     & \mif p(\sigma) = \zero & \axiom{A5} \\
(\pcc{x}{\iram{p}{q}}{y}) \sfapply \sram{\Div} = 
\sram{\Div}                                            & & \axiom{A6} \\
\LAND{k \geq n} t[\proj{k}(x)/z] = t'[\proj{k}(y)/z] \Limpl 
t[x/z] = t'[y/z]                                       & & \axiom{A7}
\end{saxcol}
\end{eqntbl}
\end{table}
In these tables, 
$p$ stands for an arbitrary function from $\MStates$ to $\Bit$,
$q$ stands for an arbitrary function from $\MStates$ to $\MStates$, 
$\sigma$ stands for an arbitrary RAM memory state from $\MStates$, 
$n$ stands for an arbitrary natural number from $\Nat$, and
$t$ and $t'$ stand for arbitrary terms of sort $\RAM$.
Moreover, $u$ is assumed to be a variable of sort $\RAM$.
We use infix notation for the use and apply operators.
We write \PGABTARAMI\ for \PGABTAram\ extended with the sort $\RAM$,
the RAM memory constants, the use operator, the apply operator, and the 
axioms for these operators.

Axioms U1--U6 and A1--A6 formalize the informal explanation of the use 
operator and the apply operator given below and in addition stipulate 
what is the result of use and apply if an inoperative RAM memory is 
involved (U6 and A6).
Axioms U7 and A7 allow for reasoning about infinite threads, and 
therefore about the behaviour produced by infinite instruction sequences 
under execution, in the context of use and apply, respectively.

The RAM memory denoted by a closed term of the form $\sram{\sigma}$, 
where $\sigma \in \MStates$, is an operative RAM memory whose state is 
$\sigma$.
The RAM memory denoted by a closed term of the form $\sram{\Div}$ is an 
inoperative RAM memory.
An inoperative RAM memory can be viewed as a RAM memory whose state is 
unavailable.
Carrying out an operation on an inoperative RAM memory is impossible.

On interaction between a thread and a RAM memory, the thread affects the 
RAM memory and the RAM memory affects the thread.
The use operator concerns the effects of a RAM memory on a thread and 
the apply operator concerns the effects of a thread on a RAM memory.
The thread denoted by a closed term of the form $t \sfuse t'$ and the
RAM memory denoted by a closed term of the form $t \sfapply t'$ are the 
thread and RAM memory, respectively, that result from carrying out the 
operations that go with the basic actions performed by the thread 
denoted by $t$ on the RAM memory denoted by $t'$.
When the operation that goes with a basic action performed by a thread 
is carried out on a RAM memory, the state of the RAM memory is changed 
according to the operation concerned and the thread is affected as 
follows: the basic action turns into the internal action $\Tau$ and 
the two ways to proceed reduce to one on the basis of the reply produced 
according to the operation concerned.
Thus, the internal action $\Tau$ is left as a trace of each basic action 
that has led to carrying out an operation on the RAM memory.

The following two elimination results for closed \PGABTARAMI\ terms are 
proved similarly to Theorems~1 and~2 from~\cite{BM19a}.
\sloppy
\begin{proposition}
\label{proposition-elim-use}
For all closed \PGABTARAMI\ terms $t$ of sort $\,\Thr\,$ in 
\linebreak[2] 
which all subterms of sort $\InSeq$ are repetition-free, there exists a 
closed \PGABTAram\ term $t'$ of sort $\Thr$ such that $t = t'$ is 
derivable from the axioms of \PGABTARAMI.
\end{proposition}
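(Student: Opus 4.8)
The plan is to prove two statements simultaneously, by induction on the structure of closed \PGABTARAMI\ terms: (i)~every closed term of sort $\Thr$ all of whose $\InSeq$-subterms are repetition-free is derivably equal to a closed \BTA\ term; and (ii)~every closed term of sort $\RAM$ satisfying the same restriction is derivably equal to a memory constant $\sram{\sigma}$ with $\sigma \in \MStates \union \set{\Div}$. Since every closed \BTA\ term is in particular a closed \PGABTAram\ term of sort $\Thr$, statement~(i) yields the proposition. I deliberately strengthen the conclusion of~(i) from ``\PGABTAram\ term'' to ``\BTA\ term'': the use axioms U4--U6 apply only once the thread argument has been reduced to a head built from $\Stop$, $\DeadEnd$, and postconditional composition, and carrying this stronger invariant through the induction makes that head form available at every step.

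The thread-extraction constructor is handled by a preliminary lemma: for every repetition-free closed \PGA\ term $p$, $\extr{p}$ is derivably equal to a closed \BTA\ term. I would prove this by induction on the length of $p$, which by PGA1 may be written as $u_1 \conc \cdots \conc u_n$; the head instruction $u_1$ selects the applicable axiom among TE1--TE13. In every case except a leading test instruction the remaining extractions $\extr{q}$ have $q$ strictly shorter than $p$, so the induction hypothesis applies directly. A leading test instruction, handled by TE4 or TE6, additionally produces the subterm $\extr{\fjmp{2} \conc X}$, of the same length as $p$; but TE11 and TE9, or TE10, rewrite this in a couple of steps to the extraction of a strictly shorter suffix, or to $\DeadEnd$. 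Thus forward jumps strictly decrease the remaining instruction count, which is precisely where repetition-freeness enters: a finite instruction sequence produces a thread of bounded depth.

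With the lemma in hand, the inductive cases $\Stop$, $\DeadEnd$, $\extr{p}$, and postconditional composition are immediate, and for $\proj{n}(t_1)$ one reduces $t_1$ to a \BTA\ term by the induction hypothesis and then eliminates the projection on this finite thread using PR1--PR4 by a subsidiary induction on $n$. The substantial case is the use operator $t_1 \sfuse t_2$: the induction hypothesis reduces $t_1$ to a closed \BTA\ term $r$ and, by statement~(ii), reduces $t_2$ to a memory constant. It remains to eliminate the resulting $r \sfuse \sram{\sigma}$ or $r \sfuse \sram{\Div}$, which I would do by a separate induction on $\depth(r)$. The head of $r$ selects U1, U2, or U3, or, when $r = \pcc{x}{\iram{p}{q}}{y}$, one of U4, U5, U6: against $\sram{\Div}$ the result is $\Tau \bapf \DeadEnd$ by U6, while against $\sram{\sigma}$ with $\sigma \in \MStates$ the concretely computable value $p(\sigma)$ chooses U4 or U5 and pushes the use operator onto a strictly shallower subthread against the memory constant $\sram{q(\sigma)}$. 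The apply operator in statement~(ii) is handled in exactly the same way against A1--A6. This subsidiary induction stays within \BTA\ terms and memory constants, so it never feeds back into the outer structural induction.

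The step I expect to be the main obstacle is the termination of this use/apply elimination, and it is here that the whole argument rests on the repetition-free hypothesis. Without it, $t_1$ would in general denote an infinite regular thread, the induction on $\depth(r)$ would have no base case, and U4 and U5 could be applied indefinitely; eliminating use in that situation genuinely requires the infinitary principles U7, A7, and AIP rather than the finite rewriting used here. With the hypothesis in force, the bounded depth of every extracted thread guarantees that each occurrence of the use and apply operators, along with every memory constant, is erased after finitely many rewriting steps, leaving the closed \BTA\ term demanded by the proposition.
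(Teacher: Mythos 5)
Your proposal is correct and takes essentially the approach the paper relies on: the paper gives no inline proof but defers to Theorems~1 and~2 of~\cite{BM19a}, which are established by precisely this kind of argument --- reduce thread extraction on repetition-free instruction sequences to closed \BTA\ terms, then do a simultaneous structural induction over the sorts $\Thr$ and $\RAM$ with a depth-based sub-induction that eliminates the use and apply operators against memory constants. Your strengthened invariant (closed \BTA\ terms for sort $\Thr$, memory constants for sort $\RAM$) is the same one needed to make that induction go through, so your write-up simply supplies the details the paper delegates to the citation.
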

\begin{proposition}
\label{proposition-elim-apply}
For all closed \PGABTARAMI\ terms $t$ of sort $\RAM$ in which all 
subterms of sort $\InSeq$ are repetition-free, there exists a closed 
\PGABTAram\ term $t'$ of sort $\RAM$ such that $t = t'$ is derivable 
from the axioms of \PGABTARAMI.
\end{proposition}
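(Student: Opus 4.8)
The plan is to proceed by structural induction on the closed \PGABTARAMI\ term $t$ of sort $\RAM$. Since the only ways to form a term of sort $\RAM$ are the RAM memory constants and the apply operator, such a $t$ is either a constant $\sram{\sigma}$ with $\sigma \in \MStates \union \set{\Div}$ or has the form $t_1 \sfapply t_2$ with $t_1$ of sort $\Thr$ and $t_2$ of sort $\RAM$. In the first case there is nothing to do. In the second case, $t_2$ is a strictly smaller closed $\RAM$ term all of whose $\InSeq$ subterms are again repetition-free, so the induction hypothesis yields a derivable equation $t_2 = \sram{\sigma}$ for some $\sigma \in \MStates \union \set{\Div}$, and it remains to eliminate the single outermost apply from $t_1 \sfapply \sram{\sigma}$.

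To treat $t_1 \sfapply \sram{\sigma}$ I would first normalize the thread argument. By Proposition~\ref{proposition-elim-use} the use and apply operators occurring inside $t_1$ can be removed, giving a closed \PGABTAram\ term of sort $\Thr$; applying the thread-extraction axioms TE1--TE13 together with the \PGA\ axioms to its thread-extraction subterms then reduces it further to a closed \BTA\ term $r$, i.e.\ one built only from $\Stop$, $\DeadEnd$, and postconditional composition. Here the hypothesis that every $\InSeq$ subterm is repetition-free is indispensable: repetition-free instruction sequences contain no backward loops and therefore extract to finite threads, so $r$ has a well-defined finite value $\depth(r)$.

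The heart of the argument is then an inner induction on $\depth(r)$ establishing that $r \sfapply \sram{\sigma} = \sram{\sigma'}$ is derivable for some $\sigma' \in \MStates \union \set{\Div}$. If $r$ is $\Stop$ or $\DeadEnd$, axioms A1 and A2 give $\sram{\sigma}$ and $\sram{\Div}$ respectively. If $r = \pcc{r_1}{\Tau}{r_2}$, axioms T1 and A3 dispatch the internal action and leave an apply of the strictly smaller thread $r_1$ on the same memory constant. If $r = \pcc{r_1}{\iram{p}{q}}{r_2}$, then for $\sigma = \Div$ axiom A6 immediately returns $\sram{\Div}$, whereas for $\sigma \in \MStates$ the value $p(\sigma) \in \set{\zero,\one}$ selects A4 or A5, rewriting the term to $r_1 \sfapply \sram{q(\sigma)}$ or $r_2 \sfapply \sram{q(\sigma)}$; in each subcase the thread argument has strictly smaller depth while the memory argument is again a constant (note $q(\sigma) \in \MStates$), so the inner induction hypothesis applies. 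Composing the inner with the outer induction delivers the required $t'$.

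The hard part will be organizing these inductions so that the whole argument is genuinely well-founded and free of hidden circularity. Because $t_1$ may itself contain nested use and apply operators, the reduction to a \BTA\ term $r$ leans on Proposition~\ref{proposition-elim-use}; if, as the reference to Theorems~1 and~2 of~\cite{BM19a} suggests, that result is obtained by the same induction, one must set up a single induction on term structure covering both sorts $\Thr$ and $\RAM$ simultaneously rather than invoking one proposition within the other. The second delicate point is to verify that finiteness of $r$ is exactly what makes the inner induction terminate and renders axiom~A7---the infinitary rule intended for infinite threads---unnecessary here; this is precisely where the repetition-free restriction does its work.
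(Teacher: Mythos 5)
Your overall architecture is sound and is essentially the intended argument: the paper itself offers no proof beyond the remark that Propositions~\ref{proposition-elim-use} and~\ref{proposition-elim-apply} are proved similarly to Theorems~1 and~2 of~\cite{BM19a}, and those proofs have exactly your shape --- an outer induction on term structure, with the application of a finite thread to a memory constant handled by an inner induction on depth using A1--A6. Note also that your reading of A3 is the right one: as printed, A3 is ill-sorted ($x \sfapply u$ has sort $\RAM$, so it cannot be prefixed by $\Tau$); the intended axiom is $(\Tau \bapf x) \sfapply u = x \sfapply u$, which is what your $\Tau$-case uses.

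The one step that does not follow as written is the normalization of $t_1$. Proposition~\ref{proposition-elim-use} only guarantees \emph{some} closed \PGABTAram\ term $t_1'$ of sort $\Thr$ with $t_1 = t_1'$ derivable; its statement guarantees neither that $t_1'$ again has repetition-free $\InSeq$ subterms nor that it is free of projection operators $\proj{n}$, so you are not entitled to conclude that TE1--TE13 and the \PGA\ axioms reduce $t_1'$ to a closed \BTA\ term $r$. Your closing paragraph points at the right repair but misdiagnoses the problem: invoking Proposition~\ref{proposition-elim-use} inside this proof is not circular (it is proved first); rather, its conclusion is too weak for your purposes. The clean fix is to prove, by one simultaneous structural induction over both sorts, the strengthened pair of statements: every closed \PGABTARAMI\ term of sort $\Thr$ whose $\InSeq$ subterms are repetition-free provably equals a closed \BTA\ term, and every such term of sort $\RAM$ provably equals a constant $\sram{\sigma}$ with $\sigma \in \MStates \union \set{\Div}$. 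The induction cases are then exactly the ones you describe: TE1--TE13 (with a sub-induction on the length of the repetition-free instruction sequence to handle jump chains), PR1--PR4 to eliminate projections, and your depth induction for the $\sfuse$ and $\sfapply$ cases; both propositions follow at once from the stronger statements, and your observation that A7/AIP is never needed for repetition-free terms is correct.
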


\section{Computing Partial Functions from \boldmath{${(\BitStr)}^n$} to 
  \boldmath{$\BitStr$}}
\label{sect-comput-boolstring-fnc}

In this section, we make precise in the setting of the algebraic theory
\PGABTARAMI\ what it means that a given instruction sequence computes a 
given partial function from ${(\BitStr)}^n$ to $\BitStr$ ($n \in \Nat$).

We use the notation $\pfunct{f}{A}{B}$ to indicate that $f$ is a partial 
function from $A$ to $B$.
We write $\len(w)$, where $w \in \Bit^*$, for the length of $w$.

Let $t$ be a closed \PGABTARAMI\ term of sort $\InSeq$,
let $n \in \Nat$, let $\pfunct{F}{{(\BitStr)}^n}{\BitStr}$, and
let $\funct{T}{\Nat}{\Nat}$.
Then $t$ \emph{computes $F$ in time~$T$ under the uniform time measure} 
if:
\begin{itemize}
\item
for all $w_1,\ldots,w_n \in \BitStr$ such that $F(w_1,\ldots,w_n)$ 
is defined, there exists a $\sigma \in \MStates$ such that: 
\begin{ldispl}
\extr{t} \sfapply \sram{\fncvn{\sigma_\emptystr}{1}{w_1}{n}{w_n}} = 
\sram{\fncv{\sigma}{0}{F(w_1,\ldots,w_n)}}\;,
\seqnsep 
\depth(\extr{t} \sfuse \sram{\fncvn{\sigma_\emptystr}{1}{w_1}{n}{w_n}})
 \leq
T(\len(w_1) + \ldots + \len(w_n))\;; 
\end{ldispl}%
\item
for all $w_1,\ldots,w_n \in \BitStr$ such that $F(w_1,\ldots,w_n)$ is
undefined:
\begin{ldispl}
\extr{t} \sfapply \sram{\fncvn{\sigma_\emptystr}{1}{w_1}{n}{w_n}} = 
\sram{\Div}\;.
\end{ldispl}%
\end{itemize}
We say that $t$ \emph{computes $F$} if there exists a 
$\funct{T}{\Nat}{\Nat}$ such that $t$ computes $F$ in time $T$ under the 
uniform time measure.

With the above definition, we can establish whether an instruction 
sequence of the kind considered in \PGABTARAMI\ computes a given partial 
function from ${(\BitStr)}^n$ to $\BitStr$ ($n \in \Nat$) mainly by 
equational reasoning using the axioms of \PGABTARAMI.
The axioms for the apply operator given in Table~\ref{axioms-use-apply}, 
i.e.\ axioms A1--A7, are instrumental in that.

The setting provided by \PGABTARAMI\ is more general than the setting 
provided by any known version of the RAM model of computation.
\PGABTARAMI\ is not suitable as a model of computation itself, but 
virtually all known versions of the RAM model of computation can be 
dealt with by imposing restrictions on the set of basic RAM instructions
($\BInstrram$).
Investigations of issues in areas such as computational complexity and 
analysis of algorithms require restriction to instructions that are 
found to be sufficiently primitive.
Without any restriction on $\BInstrram$, we even have that, for each 
computable {$\pfunct{F}{\nolinebreak{(\BitStr)}^n}{\BitStr}$}, 
there exists a closed \PGABTARAMI\ term $t$ of sort $\InSeq$ such that 
$t$ computes $F$ in one step. 

Restriction of the set of basic RAM instructions to instructions, with 
both direct and indirect addressing of registers, to carry out addition 
by one on natural numbers, to carry out comparisons of natural numbers 
on equal to and greater than, and to copy natural numbers (identifying 
bit strings with the natural numbers that they represent) gives rise to 
the version of the RAM model of computation known as the successor RAM 
model.
The basic instructions of a successor RAM are clearly very primitive, 
but as a consequence of that a successor RAM is not a very realistic 
idealization of a real computer.
In Section~\ref{sect-RAM-instructions}, we present a set of basic RAM 
instructions that yields a much more realistic idealization of a real 
computer.

Whatever version of the RAM model of computation is obtained by 
restriction of the set of basic RAM instructions considered in 
\PGABTARAMI, it is an idealization of a real computer in the sense that 
its memory offers an unbounded number of registers that can contain a 
bit string of arbitrary length instead of a bounded number of registers 
that can only contain a bit string of a fixed length.

\section{Basic Instructions for More or Less Realistic RAMs}
\label{sect-RAM-instructions}

In this section, we introduce a set of basic RAM instructions that give
rise to a version of the RAM model of computation that is a fairly 
realistic idealization of a real computer.

In general, the execution of an instruction by a real computer changes
the memory state of the computer by carrying out a certain operation on 
the contents of certain registers and changing the content of a certain 
register into the result of this.
We use a special notation reflecting this for the restricted set of 
basic RAM instructions with which a fairly realistic idealization of a 
real computer is obtained.
This restricted set of basic RAM instructions consists of all basic RAM 
instructions that have one of the following forms in the special 
notation:
\begin{ldispl}
\binop{:}\src_1{:}\src_2{:}\dst \quad \mathrm{or} \quad 
\unop{:}\src_1{:}\dst           \quad \mathrm{or} \quad 
\cmpop{:}\src_1{:}\src_2\;,
\end{ldispl}%
where 
\begin{ldispl}
\begin{array}[t]{@{}l@{\,\,}l@{\,\,}l@{}}
\binop & \in &
         \set{\addop,\subop,\mulop,\divop,\andop,\orop,\xorop}\;,
\\
\unop  & \in & \set{\notop,\shlop,\shrop,\rolop,\rorop,\movop}\;,
\\
\cmpop & \in & \set{\eqop,\gtop}\;.
\end{array}
\end{ldispl}%
\hsp{.6} and
\pagebreak[2]
\begin{ldispl}
\begin{tabular}[t]{@{}l@{\,\,}l@{\,\,}r@{}}
$\src_1$ & has one of the following forms: &
$\# i \;\, \mathrm{or} \;\, 
 i    \;\, \mathrm{or} \;\, 
 @ i, \;   \mathrm{where} \; i \in \Nat$, 
\\
$\src_2$ & has one of the following forms: &
$\# i \;\, \mathrm{or} \;\, 
 i    \;\, \mathrm{or} \;\, 
 @ i, \;   \mathrm{where} \; i \in \Nat$, 
\\
$\dst$   & has one of the following forms: &
$i \;\, \mathrm{or} \;\,
 @ i, \; \mathrm{where} \; i \in \Nat$, 
\end{tabular}
\end{ldispl}%
We write $\BInstrrram$ for this set of basic RAM instructions.
Moreover, we write \linebreak[2] $\Src$ for the set
$\set{\# i \where i \in \Nat} \union \Nat \union
 \set{@ i \where i \in \Nat}$, 
$\Dst$ for the set
$\Nat \union \set{@ i \where i \in \Nat}$,
and $\CInstrrram$ for the set
$\set{\cmpop{:}s_1{:}s_2 \where
      \cmpop \in \set{\eqop,\gtop} \Land s_1,s_2 \in \Src}$.

The following is a preliminary explanation of basic RAM instructions of 
the different forms:
\begin{itemize}
\item
on execution of an instruction of the form 
$\binop{:}\src_1{:}\src_2{:}\dst$, 
the binary operation named $\binop$ is carried out on the values given 
by $\src_1$ and $\src_2$ and the content of the register given by $\dst$ 
is changed into the result of this;
\item
on execution of an instruction of the form $\unop{:}\src_1{:}\dst$, 
the unary operation named $\unop$ is carried out on the value given by 
$\src_1$ and the content of the register given by $\dst$ is changed into
the result of this;
\item
on execution of an instruction of the form $\cmpop{:}\src_1{:}\src_2$, 
the comparison operation named $\cmpop$ is carried out on the values 
given by $\src_1$ and $\src_2$ and the result of this is produced as 
reply.
\end{itemize}

For each of the basic RAM instructions from $\BInstrrram$, each operand 
of the operation to be carried out on its execution is given in one the 
following three ways:
\begin{itemize}
\item
\emph{immediate}: it is the shortest bit string representing the natural 
number $i$ if $\src$ is of the form $\# i$;
\item
\emph{direct addressing}: it is the content of the register with number 
$i$ if $\src$ is of the form $i$;
\item
\emph{indirect addressing}: it is the content of the register whose 
number is represented by the content of the register with number $i$ if 
$\src$ is of the form~$@ i$.
\end{itemize}
Except for the comparison instructions, the result of the operation 
concerned becomes the content of a register in one the following two 
ways:
\begin{itemize}
\item
\emph{direct addressing}: it becomes the content of the register with 
number $i$ if $\dst$ is of the form $i$;
\item
\emph{indirect addressing}: it becomes the content of the register whose 
number is represented by the content of the register with number $i$ if 
$\dst$ is of the form~$@ i$.
\end{itemize}
As mentioned above, in the case of comparison instructions, the result 
of the operation concerned becomes the reply produced.

The following kinds of instructions are included in $\BInstrrram$:
\begin{itemize}
\item
\emph{arithmetic} instructions ($\addop,\subop,\mulop,\divop$) for 
carrying out operations that model arithmetic operations on natural 
numbers with respect to their binary representation by bit strings;
\item
\emph{logical} instructions ($\andop,\orop,\xorop,\notop$) for carrying 
out bitwise logical operations on bit strings;
\item
\emph{shift/rotate} instructions ($\shlop,\shrop,\rolop,\rorop$) for 
carrying out bit shift and rotate operations on bit strings;
\item
\emph{data transfer} instructions ($\movop$) for copying bit strings;
\item
\emph{comparison} instructions ($\eqop,\gtop$) for carrying out 
comparison operations on bit strings.
\end{itemize}
Data transfer instructions can be interpreted as instructions for 
carrying out the identity operation on bit strings.

Virtually all common general-purpose instructions of real computers are 
essential\-ly variants of the basic RAM instructions from $\BInstrrram$. 
Therefore, we believe that $\BInstrrram$ yields a version of the RAM 
model of computation that is a fairly realistic idealization of a real 
computer.

Above, a special notation is used for the basic RAM instructions from 
the set $\BInstrrram$.
In order to use the version of the RAM model of computation with this
set of basic RAM instructions in the setting of \PGABTARAMI, the special 
notation must be related to the notation used in that setting.

\section{More or Less Realistic RAM Instructions and \boldmath{\PGABTARAMI}}
\label{sect-RAM-realistic}

In this section, we relate the special notation for basic RAM 
instructions used in Section~\ref{sect-RAM-instructions} to the notation 
used in the setting of \PGABTARAMI.

We start with defining auxiliary functions for conversion between 
natural numbers and bit strings and evaluation of the elements of $\Src$ 
and $\Dst$. 

We write $\monus$ for proper subtraction of natural numbers.
We write $\div$ for zero-totalized Euclidean division of natural 
numbers, i.e.\ Euclidean division made total by imposing that division 
by zero yields zero (like in meadows, see \mbox{e.g.~\cite{BT07a,BM09g}}).
We use juxtaposition for concatenation of bit strings.

The \emph{natural to bit string} function 
$\funct{\ntob}{\Nat}{\BitStr}$ 
is recursively defined as follows:
\begin{itemize}
\item[] 
$\ntob(b) = b$ and 
$\ntob(n) = (n \bmod 2) \cat \ntob(n \div 2)$ if $n > 1$
\end{itemize}
and the \emph{bit string to natural} function 
$\funct{\bton}{\BitStr}{\Nat}$ 
is recursively defined as follows: 
\begin{itemize}
\item[] 
$\bton(\emptystr) = 0$ and 
$\bton(b \cat w) =\linebreak[2] 2 \mul \bton(w) + b$.
\end{itemize}
These definitions tell us that, when viewed as the binary representation 
of a natural number, the first bit of a bit string is considered the 
least significant bit.
Results of applying $\ntob$ have no leading zeros, but the operand of 
$\bton$ may have leading zeros.
Thus, we have that $\bton(\ntob(n)) = n$ and $\ntob(\bton(w)) = w'$, 
where $w'$ is $w$ without leading zeros.

For each $\sigma \in \MStates$, the \emph{src-valuation in $\sigma$} 
function $\funct{\val{\sigma}}{\Src}{\BitStr}$ is defined as follows:
\begin{itemize}
\item[] 
$\val{\sigma}(\# i) = \ntob(i)$, $\val{\sigma}(i) = \sigma(i)$, and
$\val{\sigma}(@ i) = \sigma(\bton(\sigma(i)))$  
\end{itemize}
and, 
for each $\sigma \in \MStates$, the \emph{dst-valuation in $\sigma$} 
function $\funct{\reg{\sigma}}{\Dst}{\Nat}$ is defined as follows:
\begin{itemize}
\item[] 
$\reg{\sigma}(i) = i$ and $\reg{\sigma}(@ i) = \bton(\sigma(i))$. 
\end{itemize}

We define the operations on bit strings that the operation names
$\addop$, $\subop$, $\mulop$, and $\divop$ refer to as follows:
\begin{ldispl}
\begin{tabular}[t]{@{}l@{\,\,}l@{}}
$\funct{+}{\BitStr \x \BitStr}{\BitStr}$: &
$w_1 + w_2 = \ntob(\bton(w_1) + \bton(w_2))$;
\\
$\funct{\monus}{\BitStr \x \BitStr}{\BitStr}$: &
$w_1 \monus w_2 = \ntob(\bton(w_1) \monus \bton(w_2))$;
\\
$\funct{\hsp{.25}\mul\hsp{.25}}{\BitStr \x \BitStr}{\BitStr}$: &
$w_1 \hsp{.25}\mul\hsp{.25} w_2 = 
\ntob(\bton(w_1) \hsp{.25}\mul\hsp{.25} \bton(w_2))$;
\\
$\funct{\div}{\BitStr \x \BitStr}{\BitStr}$: &
$w_1 \div w_2 = \ntob(\bton(w_1) \div \bton(w_2))$.
\end{tabular}
\end{ldispl}%
These definitions tell us that, although the operands of the operations 
$+$, $\monus$, $\mul$, and $\div$ may have leading zeros, results of 
applying these operations have no leading zeros.

We define the operations on bit strings that the operation names
$\andop$, $\orop$, $\xorop$, and $\notop$ refer to recursively as 
follows:
\begin{ldispl}
\begin{tabular}[t]{@{}l@{}}
$\funct{\Land}{\BitStr \x \BitStr}{\BitStr}$:\hsp{.25}
$\emptystr \Land \emptystr = \emptystr$,\hsp{.25}
$\emptystr \Land (b \cat w) = \zero \cat (\emptystr \Land w)$, 
\\ \hsp{.75}
$(b \cat w) \Land \emptystr = \zero \cat (w \Land \emptystr)$,\hsp{.25}
$(b_1 \cat w_1) \Land (b_2 \cat w_2) =
 (b_1 \Land b_2) \cat (w_1 \Land w_2)$;
\seqnsep
$\funct{\Lor}{\BitStr \x \BitStr}{\BitStr}$:\hsp{.25}
$\emptystr \Lor \emptystr = \emptystr$,\hsp{.25}
$\emptystr \Lor (b \cat w) = b \cat (\emptystr \Lor w)$, \\ \hsp{.75}
$(b \cat w) \Lor \emptystr = b \cat (w \Lor \emptystr)$,\hsp{.25}
$(b_1 \cat w_1) \Lor (b_2 \cat w_2) =
 (b_1 \Lor b_2) \cat (w_1 \Lor w_2)$;
\seqnsep
$\funct{\Lxor}{\BitStr \x \BitStr}{\BitStr}$:\hsp{.25}
$\emptystr \Lxor \emptystr = \emptystr$,\hsp{.25}
$\emptystr \Lxor (b \cat w) = b \cat (\emptystr \Lxor w)$, \\ \hsp{.75}
$(b \cat w) \Lxor \emptystr = b \cat (w \Lxor \emptystr)$,\hsp{.25}
$(b_1 \cat w_1) \Lxor (b_2 \cat w_2) =
 (b_1 \Lxor b_2) \cat (w_1 \Lxor w_2)$;
\seqnsep
$\funct{\Lnot}{\BitStr}{\BitStr}$:\hsp{.25}
$\Lnot \emptystr = \emptystr$,\hsp{.25}
$\Lnot (b \cat w) = (\Lnot b) \cat (\Lnot w)$.
\end{tabular}
\end{ldispl}%
These definitions tell us that, if the operands of the operations 
$\Land$, $\Lor$, and $\Lxor$ do not have the same length, sufficient 
leading zeros are assumed to exist.
Moreover, results of applying these operations and results of applying 
$\Lnot$ can have leading zeros.

We define the operations on bit strings that the operation names
$\shlop$, $\shrop$, $\rolop$, and $\rorop$ refer to as follows:
\begin{ldispl}
\begin{tabular}[t]{@{}l@{\,\,}l@{}}
$\funct{\shl}{\BitStr}{\BitStr}$: & 
$\shl \emptystr = \emptystr$,\hsp{.25} 
$\shl (b \cat w) = \zero \cat b \cat w$;
\\
$\funct{\shr}{\BitStr}{\BitStr}$: &
$\shr \emptystr = \emptystr$,\hsp{.25} 
$\shr (b \cat w) = w$;
\\
$\funct{\rol}{\BitStr}{\BitStr}$: &
$\rol \emptystr = \emptystr$,\hsp{.25} 
$\rol (w \cat b) = b \cat w$;
\\
$\funct{\ror}{\BitStr}{\BitStr}$: &
$\ror \emptystr = \emptystr$,\hsp{.25} 
$\ror (b \cat w) = w \cat b$.
\end{tabular}
\end{ldispl}%
These definitions tell us that results of applying the operations 
$\shl$, $\shr$, $\rol$, and $\ror$ can have leading zeros.
We have that $\bton(\shl w) = \bton(w) \mul 2$ and 
$\bton(\shr w) = \bton(w) \div 2$.

Now, we are ready to relate the special notation for basic RAM 
instructions used in Section~\ref{sect-RAM-instructions} to the notation 
used in the setting of \PGABTARAMI:
\begin{ldispl}
\begin{tabular}[t]{@{}l@{\,\,}l@{\,\,}l@{}}
$\addop{:}s_1{:}s_2{:}d$ & stands for $\iram{\OFunc}{q}$ where &
$q(\sigma) =
 \fncv{\sigma}{\reg{\sigma}(d)}{\val{\sigma}(s_1) + \val{\sigma}(s_2)}$;
\\
$\subop{:}s_1{:}s_2{:}d$ & stands for $\iram{\OFunc}{q}$ where &
$q(\sigma) =
 \fncv{\sigma}{\reg{\sigma}(d)}
  {\val{\sigma}(s_1) \monus \val{\sigma}(s_2)}$;
\\
$\mulop{:}s_1{:}s_2{:}d$ & stands for $\iram{\OFunc}{q}$ where &
$q(\sigma) =
 \fncv{\sigma}{\reg{\sigma}(d)}
  {\val{\sigma}(s_1) \mul \val{\sigma}(s_2)}$;
\\
$\divop{:}s_1{:}s_2{:}d$ & stands for $\iram{\OFunc}{q}$ where &
$q(\sigma) =
 \fncv{\sigma}{\reg{\sigma}(d)}
  {\val{\sigma}(s_1) \div \val{\sigma}(s_2)}$;
\\
$\andop{:}s_1{:}s_2{:}d$ & stands for $\iram{\OFunc}{q}$ where &
$q(\sigma) =
 \fncv{\sigma}{\reg{\sigma}(d)}
  {\val{\sigma}(s_1) \Land \val{\sigma}(s_2)}$;
\\
$\orop{:}s_1{:}s_2{:}d$ & stands for $\iram{\OFunc}{q}$ where &
$q(\sigma) =
 \fncv{\sigma}{\reg{\sigma}(d)}
  {\val{\sigma}(s_1) \Lor \val{\sigma}(s_2)}$;
\\
$\xorop{:}s_1{:}s_2{:}d$ & stands for $\iram{\OFunc}{q}$ where &
$q(\sigma) =
 \fncv{\sigma}{\reg{\sigma}(d)}
  {\val{\sigma}(s_1) \Lxor \val{\sigma}(s_2)}$;
\\
$\notop{:}s_1{:}d$ & stands for $\iram{\OFunc}{q}$ where &
$q(\sigma) =
 \fncv{\sigma}{\reg{\sigma}(d)}{\Lnot \val{\sigma}(s_1)}$;
\\
$\shlop{:}s_1{:}d$ & stands for $\iram{\OFunc}{q}$ where &
$q(\sigma) =
 \fncv{\sigma}{\reg{\sigma}(d)}{\shl \val{\sigma}(s_1)}$;
\\
$\shrop{:}s_1{:}d$ & stands for $\iram{\OFunc}{q}$ where &
$q(\sigma) =
 \fncv{\sigma}{\reg{\sigma}(d)}{\shr \val{\sigma}(s_1)}$;
\\
$\rolop{:}s_1{:}d$ & stands for $\iram{\OFunc}{q}$ where &
$q(\sigma) =
 \fncv{\sigma}{\reg{\sigma}(d)}{\rol \val{\sigma}(s_1)}$;
\\
$\rorop{:}s_1{:}d$ & stands for $\iram{\OFunc}{q}$ where &
$q(\sigma) =
 \fncv{\sigma}{\reg{\sigma}(d)}{\ror \val{\sigma}(s_1)}$;
\\
$\movop{:}s_1{:}d$ & stands for $\iram{\OFunc}{q}$ where &
$q(\sigma) =
 \fncv{\sigma}{\reg{\sigma}(d)}{\val{\sigma}(s_1)}$;
\\
$\eqop{:}s_1{:}s_2$ & stands for $\iram{p}{\IFunc}$ where &
$p(\sigma) = 1$ iff 
$\bton(\val{\sigma}(s_1)) = \bton(\val{\sigma}(s_2))$;
\\
$\gtop{:}s_1{:}s_2$ & stands for $\iram{p}{\IFunc}$ where &
$p(\sigma) = 1$ iff 
$\bton(\val{\sigma}(s_1)) > \bton(\val{\sigma}(s_2))$.
\end{tabular}
\end{ldispl}%

\section{Semi-Realistic RAM Programs}
\label{sect-SRRAM-programs}

In this section, we introduce a version of the RAM model of computation
that is intended to be a more or less realistic idealization of a real 
computer.
This version is obtained by restriction of the set of basic RAM 
instructions considered in \PGABTARAMI.

A \emph {semi-realistic RAM program}, called an \emph{SRRAM program} for 
short, is a closed \PGABTARAMI\ term of sort $\InSeq$ that is of the 
form $(t_1 \conc \ldots \conc t_n)\rep$, where each $t_i$ has one of the 
following forms:
\begin{ldispl}
\begin{tabular}[t]{@{}l@{\hsp{.75}}l@{\hsp{.6}}l @{}}
$a$                       & where &
$a \in \BInstrrram \diff \CInstrrram$;
\\
$\ptst{a} \conc \fjmp{l}$ & where & 
$a \in \CInstrrram$ \,and\, $l \in \Nat$;
\\
$\fjmp{l}$                & where & $l \in \Nat$;
\\
$\halt$.
\end{tabular}
\end{ldispl}%
In the \emph{SRRAM model of computation}, machines, called SRRAMs, 
consist of an SRRAM program together with a RAM memory on which it 
operates during execution.

A \emph {standard RAM program} is an SRRAM program in which only 
addition instructions, subtraction instructions, data transfer 
instructions, and comparison instructions occur (cf.~\cite{CR73a}).
A \emph {successor RAM program} is an SRRAM program in which only
addition instructions of the form $\addop{:}s_1{:}\#1{:}d$, data transfer 
instructions, and comparison instructions occur (cf.~\cite{Sch79a}).

The following theorem is a result concerning the computational power of
SRRAM programs.
\begin{theorem}
\label{theorem-Turing-computable}
For each $\pfunct{F}{{(\BitStr)}^n}{\BitStr}$, there exists an SRRAM 
program $P$ such that $P$ computes $F$ iff $F$ is Turing-computable.
\end{theorem}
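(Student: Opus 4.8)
The statement is a biconditional, so I would prove the two implications separately: the \emph{completeness} direction, that Turing-computability of $F$ yields an SRRAM program computing $F$, and the \emph{soundness} direction, that the existence of such a program forces $F$ to be Turing-computable. In both directions the resource bounds play no role, since ``$P$ computes $F$'' only requires the existence of \emph{some} $\funct{T}{\Nat}{\Nat}$; what has to be established is the equivalence of the function-computing power of SRRAMs and Turing machines, together with a faithful matching of the input/output and divergence conventions built into the definition of ``computes''.

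\textbf{Completeness} ($\Leftarrow$). Recalling that every successor RAM program is, by definition, an SRRAM program, it suffices to simulate a Turing machine by an SRRAM program with the required conventions. Given a multi-tape Turing machine $M$ computing $F$, I would build an SRRAM program $P$ of the form $(t_1 \conc \ldots \conc t_m)\rep$ as follows. First, a preprocessing block converts the inputs $w_1,\ldots,w_n$, initially held in registers $1,\ldots,n$ of $\fncvn{\sigma_\emptystr}{1}{w_1}{n}{w_n}$, into an encoding of the initial configuration of $M$ in a fixed finite block of registers, representing each tape as a bit string in a single register. The main loop then simulates the transition function of $M$ step by step: the scanned symbols are extracted and the heads moved using the shift and rotate operations $\shl,\shr,\rol,\ror$, the finite control is realised by comparison instructions of $\CInstrrram$ together with test-and-jump constructs $\ptst{a} \conc \fjmp{l}$, and symbols are rewritten using the logical and data-transfer operations. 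When $M$ enters a halting state, a postprocessing block decodes the output tape, writes $F(w_1,\ldots,w_n)$ into register~$0$, and executes $\halt$. If $M$ never halts, $P$ never reaches $\halt$, so by axioms A2--A7 its extracted thread applied to the initial memory equals $\sram{\Div}$, matching the ``undefined'' clause of the definition.

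\textbf{Soundness} ($\Rightarrow$). Conversely, suppose an SRRAM program $P$ computes $F$. I would construct a multi-tape Turing machine $M$ that, on input $w_1,\ldots,w_n$, simulates the execution of $P$ on $\fncvn{\sigma_\emptystr}{1}{w_1}{n}{w_n}$. Since every RAM memory state leaves all but finitely many registers empty, $M$ keeps the current state on a work tape as a finite list of address-content pairs written in binary. Because $P = (t_1 \conc \ldots \conc t_m)\rep$ has a finite body, $M$ maintains a program counter over the finitely many primitive instructions of one copy of that body and runs the fetch-decode-execute cycle prescribed by the thread-extraction axioms TE1--TE13 and the apply axioms A1--A6. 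Each basic RAM instruction in $\BInstrrram$ is handled by carrying out the corresponding bit-string operation; all of these ($+,\monus,\mul,\div$, the bitwise $\Land,\Lor,\Lxor,\Lnot$, the shifts and rotates, and the comparisons) are defined through $\ntob$, $\bton$, and primitive-recursive operations on natural numbers, hence Turing-computable, and immediate, direct, and indirect operands are resolved by tape lookups implementing $\val{\sigma}$ and $\reg{\sigma}$. If the simulation reaches $\halt$, $M$ reads register~$0$ and outputs its content, which equals $F(w_1,\ldots,w_n)$ by the defining property of ``computes''; if $P$ diverges (so that its extracted thread applied to the initial memory is $\sram{\Div}$ and $F(w_1,\ldots,w_n)$ is undefined), then $M$ never halts. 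Hence $M$ computes $F$, so $F$ is Turing-computable.

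\textbf{Main obstacle.} The delicate part is the soundness direction, and within it two points. First, the faithful Turing-machine representation of an \emph{unbounded} RAM memory whose registers may hold bit strings of arbitrary length, including the correct handling of indirect addressing through arbitrarily large register numbers; this is bookkeeping rather than deep, but it must be set up so that one step of $P$ is simulated in finitely many steps of $M$. Second, and more conceptually, the exact matching of the $\sram{\Div}$ convention with non-halting in \emph{both} directions: I would argue, using the elimination results (Propositions~\ref{proposition-elim-use} and~\ref{proposition-elim-apply}) together with axioms A2 and A7, that $\extr{P} \sfapply \sram{\sigma} = \sram{\Div}$ holds precisely when the extracted thread fails to terminate, so that ``$F$ undefined'' and ``the simulation does not halt'' coincide under both constructions.
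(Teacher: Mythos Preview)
Your proposal is correct and follows the same high-level strategy as the paper --- mutual simulation between SRRAMs and Turing machines --- but the two presentations differ in where the work is done. The paper's proof is almost entirely by citation: it identifies the SRRAM model with the MBRAM model of van Emde Boas (extended with shift/rotate instructions), invokes the simulation results stated there (Theorems~2.4--2.6 of~\cite{Emd90a}) for the equivalence of MBRAMs and Turing machines, and only adds the observation that shift/rotate instructions are Turing-simulable. Your version instead sketches the two simulations directly: encoding tapes in registers and driving the finite control with $\ptst{a}\conc\fjmp{l}$ for one direction, and maintaining a finite list of address--content pairs plus a program counter over the body of $(t_1\conc\ldots\conc t_m)\rep$ for the other.

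What each buys: the paper's route is short and avoids re-proving standard RAM/TM equivalence, at the cost of depending on an external reference and leaving the matching of the specific I/O and divergence conventions of ``computes'' implicit. Your route is self-contained and, in particular, makes explicit the point the paper glosses over --- that $\extr{P}\sfapply\sram{\sigma}=\sram{\Div}$ corresponds exactly to non-termination of the simulation --- which is the right thing to spell out in this algebraic setting. Your appeal to the elimination propositions for that point is more machinery than strictly needed (the correspondence follows already from A1--A6 together with A7/AIP applied to the step-by-step simulation), but it is not wrong.
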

\begin{proof}
The SRRAM model of computation is essentially the same as the MBRAM 
model of computation from~\cite{Emd90a} extended with shift/rotate 
instructions.
It follows directly from simulation results mentioned in~\cite{Emd90a} 
(part~(5) of Theorem~2.4, part~(1) of Theorem~2.5, and part~(3) of 
Theorem~2.6) that each MBRAM can be simulated by a Turing machine and 
vice versa.
Because each Turing machine can be simulated by a MBRAM, we immediate 
have that each Turing machine can be simulated by an SRRAM.
It is easy to see that the shift/rotate instructions can be simulated by 
a Turing machine.
From this and the fact that each MBRAM can be simulated by a Turing 
machine, it follows that each SRRAM can be simulated by a Turing machine
as well. 
Hence, each SRRAM is Turing equivalent to a Turing machine.
From this, the theorem follows immediately.
\qed
\end{proof}

Below, we write $\POLY$ for
$\set{T \where 
 \funct{T}{\Nat}{\Nat} \Land T \mathrm{\,is\,a\,polynomial\,function}}$.

The following theorem is a result relating the complexity class 
$\mathbf{PSPACE}$ to the functions from $\BitStr$ to $\Bit$ that can be 
computed by an SRRAM program in polynomial time.
\begin{theorem}
\label{theorem-PSPACE}
For each $\funct{F}{\BitStr}{\Bit}$, there exist an SRRAM program $P$ 
and a $T \in \POLY$ such that $P$ computes $F$ in time $T$ under the
uniform time measure iff $F \in \mathbf{PSPACE}$.
\end{theorem}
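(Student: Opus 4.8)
The plan is to reduce the claim to the corresponding simulation result for the MBRAM model established by van Emde Boas in \cite{Emd90a}, in the same spirit as the proof of Theorem~\ref{theorem-Turing-computable}. As noted there, the SRRAM model is the MBRAM model extended with shift and rotate instructions, so every MBRAM program is, up to the syntactic reshaping discussed below, an SRRAM program in which no shift/rotate instruction occurs. Since the simulation will not use shift or rotate at all, it suffices to exhibit an MBRAM that decides $F$ in polynomially many instructions under the uniform (unit-cost) time measure. First I would invoke the result of van Emde Boas that, under the uniform time measure, the sets decided by MBRAMs in polynomial time are exactly those in $\mathbf{PSPACE}$; the direction needed here is the inclusion $\mathbf{PSPACE} \subseteq$ (MBRAM polynomial time).

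The underlying construction is the standard transitive-closure simulation. Fix a deterministic Turing machine $M$ and a polynomial $s$ with $M$ deciding $F$ in space $s(n)$ on inputs of length $n$. A configuration of $M$ is encoded as a bit string of length $O(s(n))$, so there are $N = 2^{O(s(n))}$ configurations, and the one-step transition relation of $M$ is encoded as a boolean $N \times N$ matrix $A$ packed into a single integer. Its reflexive–transitive closure, equivalently $(I \lor A)^{N}$, is obtained by $O(\log N) = O(s(n))$ repeated boolean-matrix squarings, where each squaring is realized by a constant number of operations on the packed integers (an integer multiplication to form all products in parallel, together with bitwise Boolean operations and additions to combine them). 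Thus the whole simulation uses $\mathrm{poly}(n)$ MBRAM instructions, and reading off whether the initial configuration reaches an accepting configuration yields $F(w)$; taking $T$ to be a polynomial bounding this instruction count gives the required $T \in \POLY$.

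Two points then have to be reconciled with the precise framework of \PGABTARAMI. The input/output conventions must be matched to the definition of ``$t$ computes $F$'' from Section~\ref{sect-comput-boolstring-fnc}: for $n = 1$ the input $w$ sits in register~$1$ of the memory $\sram{\fncv{\sigma_\emptystr}{1}{w}}$, and the single output bit $F(w)$, identified with the corresponding bit string, must end up in register~$0$; this is a trivial post-processing step appended to the program. More importantly, the finite-control MBRAM program must be reshaped into the prescribed SRRAM form $(t_1 \conc \ldots \conc t_n)\rep$, in which the only admissible blocks are plain non-comparison instructions $a \in \BInstrrram \diff \CInstrrram$, test–jump blocks $\ptst{a} \conc \fjmp{l}$ with $a \in \CInstrrram$, forward jumps $\fjmp{l}$, and the termination instruction $\halt$; this is a routine but bookkeeping-heavy compilation of conditional branching into test-and-jump blocks inside a single outer repetition.

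The main obstacle I expect is this last reconciliation between the external MBRAM simulation and the exact instruction set and semantics fixed in Sections~\ref{sect-RAM-instructions} and~\ref{sect-RAM-realistic}: one must verify that every operation used in the transitive-closure construction --- in particular the packing of a boolean matrix into an integer and boolean matrix multiplication realized via integer multiplication --- is expressible using only $\addop,\subop,\mulop,\divop,\andop,\orop,\xorop,\movop$ with the immediate, direct, and indirect operand forms available in $\BInstrrram$, while keeping the instruction count polynomial in $n$. Once this is in place the timing argument is immediate, since under the uniform time measure the quantity $\depth(\extr{P} \sfuse \sram{\cdot})$ counts exactly the number of executed basic instructions, so the polynomial instruction bound is precisely the required $T \in \POLY$.
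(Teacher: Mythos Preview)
Your approach is essentially the same as the paper's: both reduce to the known result that unit-cost polynomial-time RAMs with multiplication and bitwise Boolean operations capture $\mathbf{PSPACE}$. The paper, however, cites the original Hartmanis--Simon paper~\cite{HS74a} (identifying the SRRAM model as the MRAM model of that paper extended with division and shift/rotate instructions) rather than~\cite{Emd90a}, and it dispenses entirely with the transitive-closure construction sketch and the syntactic reshaping you worry about, simply invoking Theorem~2 of~\cite{HS74a} directly.
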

\begin{proof}
The SRRAM model of computation is essentially the same as the MRAM model 
of computation from~\cite{HS74a} extended with division and shift/rotate 
instructions.
We know from the main result of that paper that, for each 
$\funct{F}{\BitStr}{\Bit}$, 
there even exists an SRRAM program $P$ in which division and 
shift/rotate instructions do not occur and a $T \in \POLY$ such that $P$ 
computes $F$ in time $T$ under the uniform time measure iff 
$F \in \mathbf{PSPACE}$.
\qed
\end{proof}
Theorem~\ref{theorem-PSPACE} tell us that all decision problems that
belong to $\mathbf{PSPACE}$ can be solved by means of a SRRAM program in 
polynomial time.
This means that it is highly questionable whether the SRRAM model of 
computation is a reasonable model of computation.
However, it can be made a reasonable model by switching from the uniform
time measure to another time measure.
Such a time measure is introduced in Section~\ref{sect-bit-oriented}.

The proof of Theorem~\ref{theorem-PSPACE} reveals that the theorem still
holds if division and shift/rotate instructions are excluded from the 
SRRAM programs.
It turns out that we get another result if multiplication instructions 
are excluded as well.
\begin{theorem}
\label{theorem-P}
For each $\funct{F}{\BitStr}{\Bit}$, there exist an SRRAM program $P$
\linebreak[2] 
in which multiplication, division, and shift/rotate instructions do not 
occur and a $T \in \POLY$ such that $P$ computes $F$ in time $T$ under 
the uniform time measure iff $F \in \mathbf{P}$.
\end{theorem}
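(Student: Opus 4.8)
The plan is to prove the equivalence in two directions, mirroring the structure of the proof of Theorem~\ref{theorem-PSPACE} but now appealing to a characterization of $\mathbf{P}$ rather than $\mathbf{PSPACE}$. The observation driving the proof is that an SRRAM program without multiplication, division, and shift/rotate instructions uses only addition, subtraction, logical, data transfer, and comparison instructions. Under the uniform time measure, each such instruction counts as one step, but crucially none of these instructions can grow the \emph{bit length} of a register content faster than by an additive constant (or, for logical and comparison operations, not at all beyond the length of their operands). This is the key structural fact that distinguishes this class from the multiplication-bearing case: without multiplication or shifts, the contents of registers can attain at most polynomially many bits within polynomially many steps, so the model coincides with a bit-cost-reasonable model of computation.

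First I would establish the ``only if'' direction: given such an SRRAM program $P$ and a $T \in \POLY$ with $P$ computing $F$ in time~$T$ under the uniform time measure, I would argue that $F \in \mathbf{P}$. The route is to simulate $P$ by a multi-tape Turing machine. The point to verify carefully is that, because multiplication and shifts are excluded, after $T(m)$ uniform-cost steps on an input of total length $m$, every register content that is reachable has length bounded by a polynomial in~$m$; indeed each addition can increase the maximum register length by at most one bit per step, so the maximum length grows at most linearly in the step count. Consequently the total amount of memory touched, and the bit-cost of simulating each individual instruction (including the address computations for indirect addressing), is polynomially bounded, and the whole simulation runs in polynomial time. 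I expect this length-bound argument to be the main obstacle, since it must account for indirect addressing, which can in principle reference registers with large numbers; one must confirm that register numbers appearing as addresses are themselves bounded polynomially because they, too, are built up only by the length-preserving or length-incrementing instructions.

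For the ``if'' direction, I would invoke the same correspondence used in Theorem~\ref{theorem-PSPACE}, namely that the restricted SRRAM model is essentially the MRAM model of~\cite{HS74a} with multiplication deleted. The appropriate cited result is the characterization (again in~\cite{HS74a} or the standard RAM literature, cf.~\cite{CR73a}) that RAMs with addition and subtraction but without multiplication, under the uniform time measure, compute exactly the functions in $\mathbf{P}$; equivalently, that a polynomial-time Turing machine can be simulated by such a RAM in polynomial time under the uniform measure. Given $F \in \mathbf{P}$, take a polynomial-time Turing machine for $F$, simulate it by a standard-type RAM using only addition, subtraction, data transfer, and comparison instructions, and then present that RAM as an SRRAM program of the required restricted form, with the initial-input and output conventions matching the definition of ``computes $F$ in time $T$'' from Section~\ref{sect-comput-boolstring-fnc}. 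Combining the two directions yields the stated equivalence, so the theorem follows.
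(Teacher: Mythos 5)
Your proposal is correct, but it is organized differently from the paper's proof, which is almost entirely citation-based: the paper observes that the restricted model is exactly the standard RAM model of~\cite{CR73a} extended with logical instructions, invokes Theorem~2 of~\cite{CR73a} (which already gives the polynomial relationship, in both directions, between uniform-time standard RAMs and multi-tape Turing machines), and adds only the remark that logical instructions can be simulated by a Turing machine in linear time, so that the polynomial relationship survives the extension. You instead re-derive the ``only if'' direction from first principles: your register-length growth bound (no multiplication, division, or shifts, hence at most one extra bit per addition step) is precisely the key lemma inside the Cook--Reckhow argument, and your version has the advantage of being self-contained and of handling the logical instructions and the addressing issues in a single sweep, where the paper needs the separate simulation remark; the price is that you shoulder the burden of a careful simulation argument that the paper simply imports. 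For the ``if'' direction you fall back on essentially the same citation as the paper, so the two proofs coincide there. Two small points. First, in your ``only if'' direction, what must be confirmed about indirect addressing is not that the \emph{values} of addresses are polynomially bounded---they need not be, since repeatedly executing an instruction of the form $\addop{:}i{:}i{:}i$ doubles the value in register $i$, so addresses of value $2^{\Theta(T(m))}$ are reachable---but only that their \emph{bit-lengths} are polynomially bounded; the Turing machine stores address--content pairs as bit strings in a list of at most $T(m)$ entries, so only lengths matter, and your length-growth lemma does supply exactly this. Second, your identification of the restricted model with the MRAM model of~\cite{HS74a} minus multiplication is harmless but slightly off target: the paper's identification with the standard RAM model of~\cite{CR73a} extended with logical instructions is the cleaner fit, since \cite{CR73a}, not \cite{HS74a}, is the source of the uniform-time characterization both you and the paper ultimately rely on.
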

\begin{proof}
The model of computation obtained by excluding multiplication, division, 
and shift/rotate instructions from the SRRAM programs is the standard RAM 
model of computation extended with logical instructions.
From Theorem~2 in~\cite{CR73a}, we know that time complexity on standard 
RAMs under the uniform time measure and time complexity on multi-tape 
Turing machines are polynomially related.
It is easy to see that the logical instructions can be simulated by 
a multi-tape Turing machine in linear time.
Hence, the time complexities remain polynomially related if the standard 
RAM model is extended with logical instructions.
From this, the theorem follows immediately.
\qed
\end{proof}

\section{A Bit-Oriented Time Measure for SRRAM Programs}
\label{sect-bit-oriented}

In this section, we introduce a time measure for the SRRAM model of 
computation that has it origin in the idea that the time that it takes 
to execute an instruction on an SRRAM should be based on the number 
of steps that a multi-tape Turing machine with input alphabet $\Bit$ 
needs to simulate the instruction.
The choice have been made to make use of well-known upper bounds, but 
lesser upper bounds could have been used instead. 

We write $\mathcal{C}_\Thr$ for the set of all closed \PGABTARAMI\ terms 
of sort~$\Thr$.

We define a family $\boc{}$ of partial \emph{non-uniform cost} 
functions $\pfunct{\boc{\sigma}}{\mathcal{C}_\Thr}{\Nat}$, one for each 
$\sigma \in \MStates$, recursively as follows:
\begin{ldispl}
\begin{array}[t]{@{}l@{\quad}l@{}}
\boc{\sigma}(\Stop) = 0\;, & 
\\
\boc{\sigma}(\pcc{t}{\iram{p}{q}}{t'}) = 
\boc{\sigma}(\iram{p}{q}) + \boc{q(\sigma)}(t) & 
\mif p(\sigma) = \one\;, 
\\
\boc{\sigma}(\pcc{t}{\iram{p}{q}}{t'}) = 
\boc{\sigma}(\iram{p}{q}) + \boc{q(\sigma)}(t') & 
\mif p(\sigma) = \zero\;, 
\end{array}
\end{ldispl}%
where the family of partial functions 
$\pfunct{\boc{\sigma}}{\BInstrram}{\Nat}$ (defined for all basic RAM 
instructions from $\BInstrrram$), one for each $\sigma \in \MStates$, is 
defined as follows: 
\begin{ldispl}
\begin{array}[t]{@{}l@{\quad}l@{}}
\boc{\sigma}(\binop{:}s_1{:}s_2{:}d) = 
\max \set{\boc{\sigma}(s_1),\boc{\sigma}(s_2)} + \bocp{\sigma}(d)  
 & \mif \binop \notin \set{\mulop,\divop}\;,
\\
\boc{\sigma}(\binop{:}s_1{:}s_2{:}d) = 
\boc{\sigma}(s_1) \mul \boc{\sigma}(s_2) + \bocp{\sigma}(d)
 & \mif \binop \in \set{\mulop,\divop}\;,
\\
\boc{\sigma}(\unop{:}s_1{:}d) = \boc{\sigma}(s_1) + \bocp{\sigma}(d)\;, 
\\
\boc{\sigma}(\cmpop{:}s_1{:}s_2) = 
\max \set{\boc{\sigma}(s_1),\boc{\sigma}(s_2)}\;,
\end{array}
\end{ldispl}%
where the family of total functions $\funct{\boc{\sigma}}{\Src}{\Nat}$, 
one for each $\sigma \in \MStates$, is defined as follows:%
\footnote
{Recall that $\len(w)$, where $w \in \Bit^*$, stands for the length of
 $w$.}
\begin{ldispl}
\boc{\sigma}(\# i) = \len(\ntob(i))\;,
\\
\boc{\sigma}(i)    = \len(\ntob(i)) + \len(\val{\sigma}(i))\;,
\\
\boc{\sigma}(@ i)  = \len(\ntob(i)) + \len(\val{\sigma}(i)) + 
                     \len(\val{\sigma}(\bton(\val{\sigma}(i))))
\end{ldispl}%
and the family of total functions $\funct{\bocp{\sigma}}{\Dst}{\Nat}$, 
one for each $\sigma \in \MStates$, is defined as follows:
\begin{ldispl}
\bocp{\sigma}(i)    = \len(\ntob(i))\;,
\\
\bocp{\sigma}(@ i)  = \len(\ntob(i)) + \len(\val{\sigma}(i))\;.
\end{ldispl}%

Let $t$ be a closed \PGABTARAMI\ term of sort $\InSeq$,
let $n \in \Nat$, let $\pfunct{F}{{(\BitStr)}^n}{\BitStr}$, and
let $\funct{T}{\Nat}{\Nat}$.
Then $t$ \emph{computes $F$ in time $T$ under the bit-oriented time 
measure} if:
\begin{itemize}
\item
for all $w_1,\ldots,w_n \in \BitStr$ such that $F(w_1,\ldots,w_n)$ 
is defined, there exist a $\sigma \in \MStates$ such that: 
\begin{ldispl}
\extr{t} \sfapply \sram{\fncvn{\sigma_\emptystr}{1}{w_1}{n}{w_n}} = 
\sram{\fncv{\sigma}{0}{F(w_1,\ldots,w_n)}}\;,
\seqnsep 
\boc{\fncvn{\sigma_\emptystr}{1}{w_1}{n}{w_n}}(\extr{t})
 \leq
T(\len(w_1) + \ldots + \len(w_n))\;; 
\end{ldispl}%
\item
for all $w_1,\ldots,w_n \in \BitStr$ such that $F(w_1,\ldots,w_n)$ is
undefined:
\begin{ldispl}
\extr{t} \sfapply \sram{\fncvn{\sigma_\emptystr}{1}{w_1}{n}{w_n}} = 
\sram{\Div}\;.
\end{ldispl}%
\end{itemize}
Fine-tuning this definition boils down to adapting the definition of the
family of partial functions $\pfunct{\boc{\sigma}}{\BInstrram}{\Nat}$.

The parts of Theorem~2 from~\cite{CR73a} that concern standard RAMs under 
the logarithmic time measure hold also for SRRAMs under the bit-oriented 
time measure.  
\begin{theorem}
\label{theorem-quadratic-simuls}
For each $\pfunct{F}{{(\BitStr)}^n}{\BitStr}$: 
\begin{itemize}
\item[$(a)$]
if there exist an SRRAM program $P$ and a $\funct{T}{\Nat}{\Nat}$ such 
that $P$ computes \linebreak[2] $F$ in time $T$ under the bit-oriented time measure,
then there exists a multi-tape Turing machine $M$ such that $M$ computes 
$F$ in time $O(T^2)$;
\item[$(b)$]
if there exist a multi-tape Turing machine $M$ and a 
$\funct{T}{\Nat}{\Nat}$ such that \linebreak[2] $M$ computes $F$ in time $T$, then 
there exists an SRRAM program $P$ such that $P$ computes $F$ in time 
$O(T \mul \log_2(T))$ under the bit-oriented time measure. 
\end{itemize}
\end{theorem}
\begin{proof}
In the proof of (a), one of the working tapes of $M$ is considered 
to hold a representation of the state of the RAM memory on which $P$ 
operates during execution.
A RAM memory state is represented on this working tape by a string over 
the alphabet $\set{0,1,\Blank}$ that belongs to the set defined by the 
regular expression 
$(\Blank \Blank (0 + 1) (0 + 1)^* \Blank (0 + 1) (0 + 1)^*)^*$.  
The working tape content 
$\Blank \Blank w_1 \Blank w'_1 \ldots \Blank \Blank w_n \Blank w'_n
 \Blank \Blank \Blank \ldots$ 
represents the RAM memory state
$\fncvn{\sigma_\emptystr}{\bton(w_1)}{w'_1}{\bton(w_n)}{w'_n}$.

Take arbitrary $\sigma \in \MStates$ and $i \in \Nat$, and 
let $w = \ntob(i)$ and $w' = \val{\sigma}(i)$.
Then $\Blank \Blank w \Blank w'$ occurs in the 
representation of $\sigma$ iff $\sigma(i) \neq \emptystr$.
Moreover, 
$\len(\Blank \Blank w \Blank w') =
 \len(\ntob(i)) + \len(\val{\sigma}(i))$.
From this and the definition of the function
$\pfunct{\boc{\sigma}}{\BInstrram}{\Nat}$, it follows readily that, 
if $u \in \BInstrrram$ and $u$ is of the form
$\binop{:}s_1{:}s_2{:}i$, $\unop{:}s_1{:}i$, $\binop{:}s_1{:}s_2{:}@ j$ 
or $\unop{:}s_1{:}@ j$, where $\val{\sigma}(j) = i$, then
$\len(\Blank \Blank w \Blank w')$ is bounded by a constant times 
$\boc{\sigma}(u)$. 
From this and the fact that $P$ computes $F$ in time $T$ under the 
bit-oriented time measure, it follows immediately that the length of the 
representation of $\sigma$ on the work tape is bounded by $O(T)$.
This means that searching the working tape for the entry of a register 
takes at most $O(T)$ steps.
Since $\boc{\sigma}(u) \geq 1$ for all instructions $u \in \BInstrrram$, 
at most $O(T)$ instructions are executed.
Because each instruction $u \in \BInstrrram$ involves a constant number
of searches for register entries on the working tape, this means that 
the total number of steps spent on searching the working tape for the 
entries of registers is bounded by $O(T^2)$.

The functions $\pfunct{\boc{\sigma}}{\BInstrram}{\Nat}$ are defined such 
that, in the case that $P$ computes $F$ in time $T$ under the 
bit-oriented time measure, the total number of steps that a multi-tape 
Turing machine needs to compute $F$, not counting the steps spent on 
searching the working tape for the entries of registers, is bounded by 
$O(T)$.
Because the total number of steps spent on searching the working tape 
for the entries of registers is bounded by $O(T^2)$, this means that the 
total number of steps that a multi-tape Turing machine needs to compute 
$F$ is bounded by $O(T^2)$.

In the proof of (b), $M$ is assumed to have $k$ tapes. 
The state of the RAM memory on which $P$ operates during execution is 
considered to represent the contents of the $k$ tapes of $M$ as follows: 
the content of the $i$th cell on the $j$th tape is the content of the 
register with number $k \mul i + j + c$, where $c$ is the number of 
auxiliary registers that $P$ needs to simulate Turing machine steps.
The auxiliary registers include $k$ registers for the positions of the 
$k$ tape heads.
$P$ can read out or alter the cells under the $k$ tape heads by means 
of indirect addressing through these position-holding registers.

It follows immediately from the definition of the functions 
$\pfunct{\boc{\sigma}}{\BInstrram}{\Nat}$ that, under the bit oriented 
time measure, the time that an SRRAM program needs per Turing machine 
step is a constant plus the time spent on accessing the registers that 
contain the contents of the cells under the tape heads.
In the case that $M$ computes $F$ in time $T$, the number of cells $M$ 
can move tape heads away from the starting position is bounded by $T$.
From this, the fact that an SRRAM program uses indirect addressing to 
access the registers that contain the contents of the cells under the 
tape heads, and the fact that 
$\len(\ntob(i)) = \lfloor \log_2(i) \rfloor + 1$ if $i > 0$,
it follows immediately that the time it takes an SRRAM program computing 
$F$ to access these farthest tape cells is bounded by $O(\log_2(T))$.
Because each step of $M$ involves accessing the cells under its tape 
heads, this means that the time that a SRRAM program needs to compute 
$F$ is bounded by $O(T \mul \log_2(T))$.
\qed
\end{proof}
The approaches to the proofs of the two parts of 
Theorem~\ref{theorem-quadratic-simuls} have been inspired by the proofs 
of the corresponding parts of Theorem~2 from~\cite{CR73a}. 

The following corollary of Theorem~\ref{theorem-quadratic-simuls} is a 
counterpart of Theorem~\ref{theorem-P}.
\begin{corollary}
\label{theorem-P-ext}
For each $\funct{F}{{(\BitStr)}^n}{\Bit}$, there exist an SRRAM program 
$P$ and a $T \in \POLY$ such that $P$ computes $F$ in time $T$ under the 
bit-oriented time measure iff $F \in \mathbf{P}$.
\end{corollary}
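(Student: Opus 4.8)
The plan is to derive the corollary directly from Theorem~\ref{theorem-quadratic-simuls} together with the standard characterisation of $\mathbf{P}$ as the class of functions computable by a multi-tape Turing machine in polynomial time. First I would note that a function $\funct{F}{{(\BitStr)}^n}{\Bit}$ is just a special case of the functions $\pfunct{F}{{(\BitStr)}^n}{\BitStr}$ covered by Theorem~\ref{theorem-quadratic-simuls}, since $\Bit \subseteq \BitStr$; hence that theorem is available for such $F$ without any modification, and the whole argument reduces to a polynomial-versus-quadratic bookkeeping on the time bound $T$.

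For the ``only if'' direction I would assume that there exist an SRRAM program $P$ and a $T \in \POLY$ such that $P$ computes $F$ in time $T$ under the bit-oriented time measure. By Theorem~\ref{theorem-quadratic-simuls} there is then a multi-tape Turing machine $M$ computing $F$ in time $O(T^2)$. Because $\POLY$ is closed under multiplication, $T^2$ is again a polynomial, so $M$ computes $F$ in polynomial time, and therefore $F \in \mathbf{P}$.

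For the ``if'' direction I would assume $F \in \mathbf{P}$, so that some multi-tape Turing machine $M$ computes $F$ within a polynomial time bound $p$. Here I would simply take $T$ to be the polynomial $p$ itself; since $p(m) \leq p(m)^2 = T(m)^2$ for all $m$, the machine $M$ computes $F$ in time $O(T^2)$, and the converse implication of Theorem~\ref{theorem-quadratic-simuls} yields an SRRAM program $P$ computing $F$ in time $T$ under the bit-oriented time measure, with $T \in \POLY$, as required.

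The main obstacle is not conceptual but a matter of aligning conventions. First, $\mathbf{P}$ is often defined via single-tape Turing machines whereas Theorem~\ref{theorem-quadratic-simuls} is phrased for multi-tape machines; this is harmless, since single-tape and multi-tape polynomial time coincide, but it should be stated. Second, one must check that the bound $T$ produced by, and fed into, Theorem~\ref{theorem-quadratic-simuls} can in \emph{both} directions be chosen within $\POLY$: the squaring step is immediate from closure of $\POLY$ under products, and the converse step is covered by the observation that a polynomial dominates itself under squaring, so no genuine difficulty arises.
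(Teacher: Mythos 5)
Your proposal is correct and matches the paper's intended derivation: the paper states this result as an immediate corollary of Theorem~\ref{theorem-quadratic-simuls}, relying precisely on the bookkeeping you spell out, namely that $T \in \POLY$ implies $T^2 \in \POLY$ for the ``only if'' direction, and that a polynomial bound $p$ satisfies $p(m) \leq p(m)^2$ so that $T = p$ works in the ``if'' direction. Your additional remarks (totality of $F$, $\Bit \subseteq \BitStr$, multi-tape versus single-tape machines) are sound and do not change the argument.
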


\section{A Bit-Oriented Space Measure for SRRAM Programs}
\label{sect-bit-oriented-space}

In this section, we introduce for the sake of completeness a 
bit-oriented space measure for the SRRAM model of computation.
This space measure originates from~\cite{Emd90a}.

Let $t$ be a closed \PGABTARAMI\ term of sort $\InSeq$,
let $n \in \Nat$, let $\pfunct{F}{{(\BitStr)}^n}{\BitStr}$, and 
let $\funct{S}{\Nat}{\Nat}$.
Then $t$ \emph{computes $F$ in space $S$} if $t$ computes $F$ and,
for all $w_1,\ldots,w_n \in \BitStr$ such that $F(w_1,\ldots,w_n)$ 
is defined, for some $m \in \Natpos$, there exist closed \PGABTARAMI\ 
terms $t_1,\ldots,t_m$ of sort $\Thr$ and RAM memory states 
$\sigma_1,\ldots,\sigma_m$ such that:
\begin{itemize}
\item 
$t_1 = \extr{t}$;
\item
$t_m = \Stop$;
\item
$\sigma_1(i) = \emptystr$ 
for all $i \in \Nat$ with $i \notin \set{1,\ldots,n}$; 
\item
$\sigma_j(i) = w_i$ 
for all $i \in \set{1,\ldots,n}$ and $j \in \set{1,\ldots,m}$;
\item
$\sigma_m(0) = F(w_1,\ldots,w_n)$;
\item
for all $j \in \set{1,\ldots,m}$, 
$t_j \sfapply \sram{\sigma_j} = t_{j+1} \sfapply \sram{\sigma_{j+1}}$
is a closed substitution instance of an instance of axiom schema A4 or 
A5;
\item
$\max 
 \set{\sum_{i \in \Nat \diff \set{1,\ldots,n}}
       (\len(i) + \len(\sigma_j(i))) \where
      j \in \set{1,\ldots,m}} \leq
 S(\len(w_1) + \ldots + \len(w_n))$.
\end{itemize}
The pairs $(t_j,\sigma_j)$, for $j \in \set{1,\ldots,m}$, can be 
looked upon as SRRAM configurations and the sequence  
$(t_1,\sigma_1)\cat \ldots \cat (t_m,\sigma_m)$ 
can be looked upon as a SRRAM computation.
Instead of introducing off-line SRRAMs, we require that during 
computations the contents of the input registers are never changed.
 
In the above definition space is essentially measured following the 
third method mentioned in~\cite{Emd90a}, using the function 
$\mathit{size_b}$ from that paper as size function.
By this space measure, it is guaranteed that space complexity on SRRAMs 
and space complexity on multi-tape Turing machines are related by a 
constant factor.

\section{Discussion on the Bit-Oriented Time Measure}
\label{sect-discussion}

In the field of computational complexity, a model of computation is 
considered a reasonable sequential model of computation if time 
complexity on its machines and time complexity on multi-tape Turing 
machines are polynomially related and space complexity on its machines 
and space complexity on multi-tape Turing machines are related by a 
constant factor (cf.\ the Invariance Thesis in~\cite{Emd90a}).
This makes the complexity classes that represent the fundamental 
concepts of computational complexity, i.e.\ L, NL, P, NP, PSPACE, 
NPSPACE, EXP, NEXP, EXPSPACE, NEXPSPACE, machine-independent insofar as 
reasonable sequential models of computation are concerned.

The logarithmic time measure has been introduced in all but the simplest 
known versions of the RAM model of computation to obtain a reasonable 
model.
However, it is questionable whether the logarithmic time measure is the 
most natural time measure.
It takes the lengths of the bit strings involved in the execution of an
instruction into account, but not the operation involved.
The logarithmic time measure works in the case of the known versions of 
the RAM model of computation only because the operations involved can 
always be simulated by a multi-tape Turing machine in polynomial time.

The bit-oriented time measure introduced in this paper takes both the 
operation and the lengths of the bit strings involved in the execution 
of an instruction into account.
Thereby, the bit-oriented time measure actually takes the total number 
of operations on bits involved in the execution of an instruction into 
account.
This property is arguably the best justification of a time measure 
intended to make the time measures of different models of computation 
comparable.

With $\Bit$ as input alphabet, a version of the Turing machine model of 
computation supports operations on bits more directly than most other 
well-known models of computation.
This has been an important reason to consider in this paper the times 
that it takes to carry out the operations on bits in the setting of a 
version of the Turing machine model.
Another important reason has been that the complexity classes that 
represent the fundamental concepts of computational complexity were 
initially introduced and studied in the setting of the multi-tape Turing 
machine model.

The extended logarithmic time measure introduced in~\cite{Die11a} also
takes the total number of operations on bits involved in the execution 
of an instruction into account, but, there, the choice is made to 
consider the times that it takes to carry out the operations on bits in 
the setting of the successor RAM model.
This is the most primitive version of the RAM model and supports 
operations on bits equally directly as multi-tape Turing machine model.
The approach of~\cite{Die11a} may be advantageous if one is interested 
in relating complexity results based on other versions of the RAM model 
to complexity results based on the successor RAM model, but is 
disadvantageous if one is interested in relating complexity results 
based on versions of the RAM model to complexity results based on the 
multi-tape Turing machine model.

The idea behind the bit-oriented time measure from this paper is that 
the time that it takes to execute an instruction on an SRRAM should be 
based on the number of steps that a multi-tape Turing machine with input 
alphabet $\Bit$ needs to simulate the instruction.
Moreover, the choice has been made to use well-known polynomial upper 
bounds. 
By producing the bit-oriented measure in this way for programs of RAMs 
of a kind obtained by restricting the set of basic RAM instructions of 
\PGABTARAMI\ in another way than for SRRAM programs, it is guaranteed
that Theorem~\ref{theorem-quadratic-simuls} holds for these programs as 
well.  
Examination of the proof of that theorem teaches us that it depends only
on the use of upper bounds for the number of steps that a multi-tape 
Turing machine with input alphabet $\Bit$ needs to simulate the 
instructions that may occur in the programs.

\section{Concluding Remarks}
\label{sect-concl}

We have presented an instantiation of a parameterized algebraic theory 
of single-pass instruction sequences, the behaviours produced by such 
instruction sequences under execution, and the interaction between such 
behaviours and components of an execution environment for instruction
sequences. 
In the instantiation concerned, RAM memories are taken as the components 
of an execution environment, instructions for a RAM are taken as basic 
instructions, and an execution environment consists of only one 
component. 
Because we have opted for the most general instantiation, all 
instructions that do not read out or alter more than one register from 
the RAM memory are taken as basic instructions.

The presentation of the instantiation has been set up in such a way that 
the introduction of services, the generic kind of execution-environment 
components from the parameterized theory, is circumvented.
In~\cite{BM18b}, the presentation of another instantiation of the same 
parameterized theory has been set up in the same way.
The distinguishing feature of this way of presenting an instantiation of 
the parameterized theory is that it yields a less involved presentation 
than the way adopted in earlier work based on an instantiation of this 
parameterized \linebreak[2] theory.

We have provided evidence for the claim that the presented algebraic 
theory provides a setting for the development of theory in areas such as 
computational complexity and analysis of algorithms that is more general 
than the setting provided by some known version of the RAM model of 
computation.
We have among other things shown that a relatively unknown, but 
realistic, version of the RAM model can be dealt with in the setting 
concerned by imposing apposite restrictions.
For this model, an alternative to the usual time measures for versions 
of the RAM model, called the bit-oriented time measure has been 
introduced.

Related to the introduction of the bit-oriented time measure is the 
choice for registers that contain bit strings instead of natural 
numbers.
Whereas it is usual in versions of the RAM model of computation that bit 
strings are represented by natural numbers, here natural numbers are 
represented by bit strings. 
Moreover, the choice has been made to represent the natural number $0$ 
by the bit string $0$ and to adopt the empty bit string as the register 
content that indicates that a register is (as yet) unused.
Therefore, we have, as in most other versions of the RAM model, 
$\len(0) = 1$ and $\len(i+1) = \lfloor \log_2(i + 1) \rfloor + 1$ 
if $\len$ on natural numbers is simply defined by 
$\len(i) = \len(\ntob(i))$ (as before $\len(w)$, where $w \in \Bit^*$, 
stands for the length of $w$).
  
The closed terms of the presented algebraic theory that are used as RAM
programs can be considered to constitute a programming language of which 
the syntax and semantics is defined following an algebraic approach.
However, this approach is more operational than the usual algebraic 
approach, which is among other things followed 
in~\cite{BDMW81a,BWP87a,GM96a}.
The more operational approach is advantageous in the case of a language 
that is used to investigate issues in the areas of computational 
complexity and analysis of algorithms.

The work presented in this paper is among other things concerned with 
formalization in the areas of computational complexity and analysis of 
algorithms.
To the best of my knowledge, very little work has been done in this 
area.
Three notable exceptions are~\cite{Nor11a,XZU13a,AR15a}.
However, those papers are concerned with formalization in a theorem
prover (HOL4, Isabelle/HOL, Matita) and focussed on some version of the 
Turing machine model of computation.
This makes it impracticable to compare the work presented in those 
papers with the work presented here.

The contributions of this paper to the work on models of computation 
rely heavily on~\cite{CR73a,HS74a}.
A variant of the bit-oriented time measure has been proposed 
in~\cite{Die11a}. 

This paper introduces a setting for the development of theory in areas 
such as computational complexity and analysis of algorithms using 
virtually any version of the RAM model of computation.
This setting is an instantiation of a parametrized algebraic theory. 
Several other models of computation can be covered by other 
instantiations of this theory.
The instantiation for the Turing machine model of computation is
described in~\cite{BM19a}.
However, the theory concerned is not general enough to cover parallel 
models of computation.
An interesting option for future work is to study how it can be 
extended to a theory that covers parallel models of computation.

\bibliographystyle{plain}

\bibliography{IS}

\end{document}